\newtheorem{theorem}{Theorem}
\newcommand{\nn}{\nonumber}
\def\a{\alpha}
\def\b{\beta}
\def\g{\gamma}
\def\d{\delta}
\def\th{\theta}
\def\la{\lambda}
\def\n{\nu}
\def\r{\rho}
\def\s{\sigma}
\def\p{\psi}
\def\w{\omega}
\def\P{\Psi}
\def\<{\langle}
\def\>{\rangle}
\def\Tr{Tr}
\def\Tr{{\rm Tr}}
\def\tr{{\rm tr}}
\def\jmath{{j}}
\def\perm{{\textrm{perm}}}
\def\mN{{\mathcal{N}}}
\begin{document}

	\title{Entanglement of Identical Particles
		 and Coherence\\
		 in the First Quantization Language}

	\author{Seungbeom Chin }
	\email{sbthesy@gmail.com}
\affiliation{Molecular Quantum Dynamics and Information Theory Laboratory, Department of Chemistry, Sungkyunkwan University, Suwon 16419, Korea}

\author{Joonsuk Huh}%
\email{joonsukhuh@gmail.com}
\affiliation{Molecular Quantum Dynamics and Information Theory Laboratory, Department of Chemistry, Sungkyunkwan University, Suwon 16419, Korea}
\affiliation{SKKU Advanced Institute of Nanotechnology (SAINT), Sungkyunkwan University, Suwon 16419, Korea}

	
	\begin{abstract} 
We suggest a formalism to illustrate the entanglement of identical particles in the first quantization language (1QL). Our 1QL formalism enables one to exploit all the well-established quantum information tools to understand the indistinguishable ones, including the reduced density matrix and familiar entanglement measures. The rigorous quantitative relation between the amount of entanglement and the spatial coherence of particles is possible in this formalism. Our entanglement detection process is a generalization of the entanglement extraction protocol for identical particles with mode splitting proposed by Killoran et al. (2014).
	\end{abstract}
	
	
	\maketitle

\section{Introduction} \label{intro}

Entanglement is a fundamental quantum feature that cannot be imitated by classical systems, and also works as a useful informational resource.
Entanglement causes EPR \cite{einstein1935can} and Bell \cite{bell2004speakable} paradoxes that reveal the non-local property of quantum systems, and enables several quantum tasks such as quantum  teleportation \cite{bennett1993teleporting} and many quantum algorithms \cite{horodecki2009quantum}.
There is a consensus that the entanglement is based on the superposition of distinctive multipartite states \cite{ schrodinger1935discussion}.

On the other hand, it arises confusion to apply this interpretation of the entanglement to the entanglement for identical particles.
A set of identical particles satisfies the symmetrization principle \cite{dirac1981principles}, which is represented as the exchange symmetry of the corresponding wave function among the particles in the first quantization language (1QL). This symmetrization results in the superposition of multipartite states. Just focusing on the mathematical structure of the states, one could assume that the particles are strongly entangled \cite{ichikawa2008exchange,wei2010exchange}.
However, since the particle indistinguishability prevents any observer from addressing the individual particles, it is meaningless to discuss the entanglement of the total system by  taking the particles as subsystems. Many works claim that such entanglement is just a mathematical artifact that is unphysical and useless (not a quantum resource) \cite{ ghirardi1977gc,  ghirardi2004general, schliemann2001j,schliemann2001ja, eckert2002k, esteve2008j, ghirardi2002entanglement, paskauskas2001r, zanardi2002p, shi2003shi, barnum2004subsystem, barnum2005generalization, tichy2013entanglement}. A common viewpoint of these works is that the particle labels in 1QL, which are unphysical for the case of identical particles, cause the confusion. Hence, to see whether there exists realistic and useful entanglement that comes from the particle indistinguishability, we should be able to discard the unphysical entanglement (a mathematical illusion) from the physical one.

One possible attempt is to play in the second quantization language (2QL, a mode-based approach) \cite{benatti2011entanglement,benatti2012bipartite,benatti2012entanglement,   marzolino2015quantum,marzolino2016performances,benatti2017remarks}, which only involves mode creation operators without signifying the individual particle labels. The separability in 2QL is related to commuting algebras of observables, instead of the Hilbert space tensor structure. Quantities such as the negativity and robustness of entanglement are suggested as criterions for discriminating the separability of given states. 
Recently, an unorthodox approach is suggested \cite{franco2016quantum,sciara2017universality,bellomo2017,compagno2018dealing, giuseppe2018, castellini2018entanglement}, which is named the non-standard approach (NSA) by the authors. It is a particle-based language (similar to 1QL), however without imposing pseudo-labels to the particles (similar to the 2QL). 
It provides a tool to identify the quantitative relation between the spatial overlap of identical particles and entanglement.   
With NSA, one can define the partial trace of particle states \cite{franco2016quantum} and Schmidt-decompose the states universally \cite{sciara2017universality}, with which we can define familiar entanglement measures such as von Neumann entropy and concurrence (a recent work showed that this NSA can be recovered in 2QL \cite{lourencco2019entanglement}).

In this paper, we show that 1QL approach can accomplish the same tasks as the former approaches (and more), by extending the symmetrization principle to subsystems that involves some part of the whole particles. With the \emph{symmetrized partial trace} (a partial trace for identical particles), we can define the reduced density matrix for respective subsystems and familiar entanglement measures such as von Neumann entropy.
By using 1QL, we can exploit many familiar concepts in quantum information, e.g., coherence, to quantitatively understand the entanglement generation process of identical particles.
Our formalism also enables us to interpret the entanglement  extraction protocol of identical particles suggested in Ref. \cite{killoran2014extracting} in a more generalized viewpoint.
The authors of Ref. \cite{killoran2014extracting} showed that the particle-based symmetrization entanglement can be extracted (or transferred) to the mode entanglement, which is a useful resource for quantum information tasks (a fermionic example of this type of protocol can be found in Ref. \cite{cavalcanti2007useful}). 
The case discussed in Ref. \cite{killoran2014extracting} is restricted to completely overlapping identical particles with internal pseudospins in space.  Here we treat more generalized situations when particles are partially overlapping. Our result shows that \emph{the spatial coherence of identical particles is a necessary (but not sufficient) condition that the entanglement of the identical particles are detected.} Theorem \ref{eandc} of Section \ref{pisc} clarifies the conditions for the spatial coherence of identical particles and the location of detectors to extract the entanglement of identical particles.

This work is organized as follows:  
Section \ref{1QL} introduces a 1QL formalism for the entanglement of identical particles with internal degrees of freedom. By defining the partial trace of multipartite systems for the case of identical particles, we can quantify the amount of the entanglement. In Section \ref{pisc}, we show that spatial coherence is an imperative element for a given set of identical particles with internal pseudospins to be entangled. Here the computational basis of spatial coherence is determined by the distinguishable detectors, which compose the bipartite entangled systems. Two simplest but informative examples ($N=2$ and $N=3$) are also analyzed in Section \ref{example}. In Section \ref{killoran}, we reinterpret the entanglement extraction protocol by mode splitting proposed by Killoran et al. \cite{killoran2014extracting} using our formalism. We can state that our formalism is a more generalized case of that in Ref. \cite{killoran2014extracting}. In Section \ref{conclusions}, we summarize our discussion and suggest some possible researches in the future. 

\section{1QL formalism for the entanglement of identical particles}\label{1QL}

We can find the key requirements for the definition of the entanglement of identical particles in Ref. \cite{dalton2017quantum}. First, the subsystems are distinguishable from each other (individually accessible by physical observers). Second, measurements are made on the subsystems. Third, the subsystems can exist as separate quantum systems (hence the symmetrization of particle states itself does not correspond to an actual entanglement, since it prohibits any form of separable states except when all particles are in the same state). Only when these conditions are satisfied, we can state the entanglement is $physical$. And since modes are distinguishable and particles are not, \emph{a mode (or a collection of modes) can be a subsystem} for our case, not particles.

The above requirements seem to imply that 2QL is more suitable for the entanglement of identical particles than 1QL since 2QL indicates only modes but not individual particles. On the other hand, it is usually considered that the pseudo-labels of particles accompanied by 1QL scramble the physical and fictitious entanglements.
However, as we will show in this section, it is possible to illustrate the physical entanglement of identical particles with 1QL (by applying the symmetrization principle not only to particles but also to detectors). Besides, once it is achieved, the quantification of entanglement with the reduced density matrix for one subsystem is also possible, as in NSA \cite{franco2016quantum, sciara2017universality}.
This implies that, by staying in 1QL, one can exploit the well-established entanglement formalism of distinguishable particles to analyze the case of indistinguishable particles.
 


\begin{figure}[t]
	\centering
	\includegraphics[width=8cm]{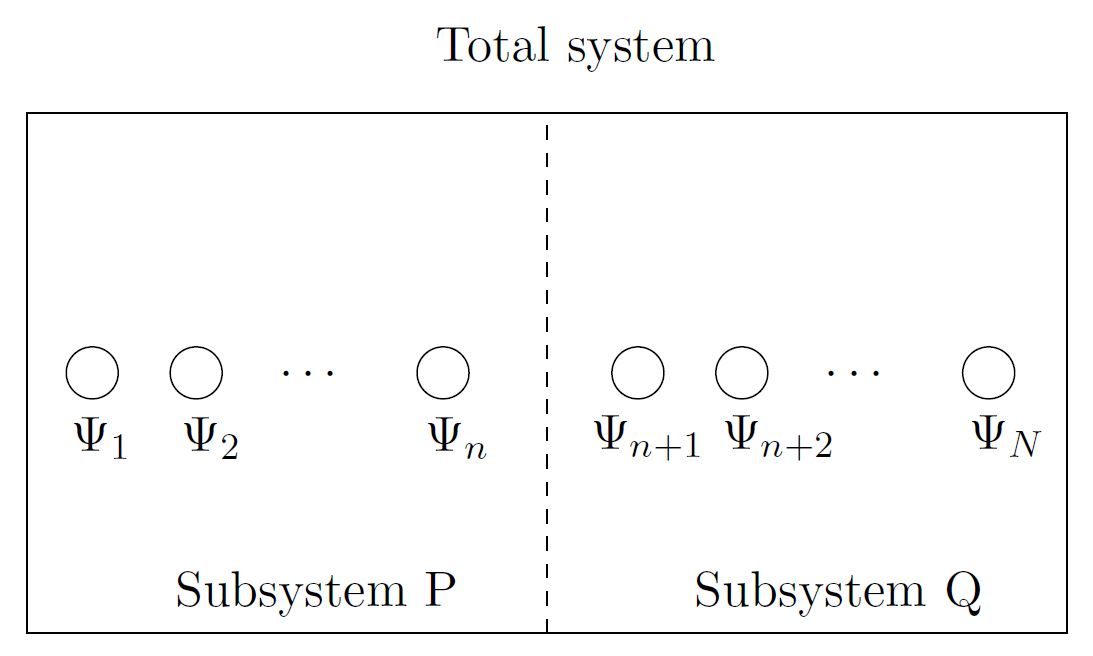}
	\caption{$N$ identical particles which are devided into two subsystems $P$ and $Q$. An observer in $P$ can know that $n$ identical particles have states $(\P_1,\cdots,\P_n)$, and one in $Q$ can know that  $(N-n)$ identical particles have states $(\P_{n+1},\cdots,\P_N)$ }
	\label{twosub}
\end{figure}

Let us consider $N$ instinguishable bosons contained in a physical system (when we state that some particles are in a physical system, it implies that the particles have potential to be detected by all the observers who can classically communicate with each other).
In 1QL, the wavefunction of a particle with pseudo-label A  with a state $\P$ contained in the system is decribed by $|\P\>_A$ (note that the pseudo-label is a mere mathematical tool for identical particles, which cannot be addressed individually), where
$\P$ includes information on the spatial distribution $\p$ and internal state of the particles $s$, i.e., $\P = (\p,s)$. Then the transition amplitude of $|\P\>_A$ is given by
\begin{align}\label{trans}
_A\<\Phi|\P\>_B = \<\Phi|\P\> \d_{AB}.
\end{align} 
 
The total state of $N$ identical particles with pseudo-labels $A_a$ ($a=1,\dots, N$) and states $\P_i$ ($i=1,\dots,N$) is described as follows: 
\begin{align}\label{totalstate}
&|\P_1,\P_2, \cdots , \P_N\> \nn \\ &= \mathcal{N}_N(\n)\Big[ \sum_{\s} |\P_{1}\>_{A_{\s(1)}}|\P_{2}\>_{A_{\s(2)}}\cdots |\P_{N}\>_{A_{\s(N)}}\Big],
\end{align}
where the summation is over all possible permutations $\s$, which represents the exchange symmetry among particles. $\mN _N(\n)$ is the normalization factor that depends on the number of particles that are in the same state. More specifically, when 
\begin{align}
(\P_1,\P_2,\cdots, \P_N) = (\underbrace{x_1,\cdots, x_1}_{\n_1}, \underbrace{x_2,\cdots, x_2}_{\n_2},\cdots,\underbrace{x_l,\cdots, x_l}_{\n_l} ),
\end{align}
$\mN_N(\n)$ is written as
\begin{align}
 \mN_N(\n)= \sqrt{\frac{\prod_{j=1}^{l}\n_j! }{ N!} }.
\end{align} 
Using Eq. \eqref{trans} and \eqref{totalstate},
it is direct to see that the transition amplitude from $|\P_1,\cdots , \P_N\>$ to $|\Phi_1,\cdots , \Phi_N\>$ is given by
\begin{align}\label{ta}
& \< \Phi_1,\cdots , \Phi_N|\P_1,\cdots , \P_N\>\nn \\ & = \mN_N(\Phi)\mN_N(\P) \sum_{\r,\s}\prod_{i=1}^{N}\<\Phi_{\r(i)}|\P_{\s(i)}\>.
\end{align}
By defining a matrix $A$ so that its entries are imposed as $A_{ij} =\<\Phi_i|\P_j\> $, Eq. \eqref{ta} is rewritten as 
\begin{align}\label{perm}
  \< \Phi_1,\cdots , \Phi_N|\P_1,\cdots , \P_N\> =\frac{1}{N!\mN_{N}(\Phi)\mN_N(\P) } \perm(A),
\end{align} where 
`perm' represents the matrix permanent (for the case of fermions, Eq. \eqref{totalstate} experiences sign changes in the summation along the degree of permutation, hence the transition amplitude is proportional to the matrix determinant of $A$).

It is slightly subtle to consider the states of subsystems that contain $n$ ($\le N$) identical particles (see Fig. \ref{twosub}). While the particle states can be identified, e.g., as $(\P_1,\dots,\P_n)$, the particle pseudolabels are unphysical quantities that cannot be detected in principle. Hence, the partial wavefunction $|\P_1,\cdots,\P_n\>$ for the subsystem also should be symmetrized with respect to the particle labels. We can achieve the symmetric states by arranging the wavefuctions in the form of elementary symmetric polynomials, i.e., 
\begin{align}\label{partsym}
|\P_1,\cdots,\P_n\> &= \mathcal{N}_{[1]}(\P)\sum_{\substack{a_1\neq a_2 \\ \neq \cdots\neq a_n} }|\P_1\>_{A_{a_1}} \cdots |\P_{n}\>_{A_{a_n}}, 
\end{align} where $1\le a_p \le N$ for all $p=1,\cdots, n$.
\begin{widetext}
For example, when $n=1$, the wavefunction is a simple summation over all pseudo-labeled particles, 
\begin{align}\label{k1}
 |\P\> = \mN_{[1]}(\P) \Big( |\P\>_{A_1}+ |\P\>_{A_2}+ \dots +|\P\>_{A_N}\Big). \quad \Big(\mN_{[1]}(\P) =  \frac{1}{\sqrt{N}} \Big)  
\end{align}
When $n=2$ and $N=3$, we have
\begin{align} 
|\P_1,\P_2\> = \mN_{[2]}(\P) \Big(|\P_1\>_{A_1}|\P_2\>_{A_2}+ |\P_1\>_{A_1}|\P_2\>_{A_3}+ |\P_1\>_{A_2}|\P_2\>_{A_1}  +|\P_1\>_{A_2}|\P_2\>_{A_3}+|\P_1\>_{A_3}|\P_2\>_{A_1}+|\P_1\>_{A_3}|\P_2\>_{A_2}\Big),
\end{align}
where $\mN_{[2]}(\P)=1/\sqrt{3!}$ ($1/\sqrt{3}$) when $\P_1 \neq \P_2$ ($\P_1 = \P_2$).
\end{widetext}
Eq. \eqref{partsym} renders us to define the \emph{symmetrized partial trace} of a given density marix $\r$ over a subsystem S with $n$ identical particles. 
If the identity matrix for $S$ is expressed with 
\begin{align}\label{IS}
\mathbb{I}_S = \sum_{a}|\Phi_{1}^a, \cdots,\Phi_{n}^a\> \<\Phi_{1}^a,\cdots, \Phi_{n}^a|
\end{align} ($\{|\Phi_{1}^a, \cdots,\Phi_{n}^a\>\}_a$ composes the complete symmetric computational basis set for the subsystem $S$),
so that it satisfies $\mathbb{I}_S|\P_1,\cdots,\P_N\> =|\P_1,\cdots,\P_N\>$,
the symmetrized partial trace over $S$ is defined by
\begin{align}\label{partial}
	\Tr_S(\r) \equiv \sum_a \<\Phi_{1}^a, \dots,\Phi_{n}^a| \r |\Phi_{1}^a,\dots, \Phi_{n}^a\>.
\end{align}
Since the whole wave functions contained in Eq.~\eqref{partial} are symmetric under the particle pseudolabels, the resultant reduced density matrix is symmetric without doubt.   
With such a reduced density matrix we can evaluate the amount of entanglement, e.g., by defining von Neumann entropy, which is detectable and physical (it is worth mentioning that  the fictitous entanglement of identical articles is from the pseudo-label symmetry-breaking of states).
Even if Eq.~\eqref{partial} is defined with pure states of identical particles, the extension of the argument so far to a mixed state case $\r= \sum_{a}\p_a|\P_a\>\<\P_a|$ is straightforward.
It is also not very hard to show that the symmetrized states Eq. \eqref{totalstate} and  Eq. \eqref{partsym} are equivalent to the label-absent holistic particle states defined in NSA \cite{compagno2018dealing}. See Appendix \ref{NSA} for a more detailed explanation. 
  
Now we check whether the entanglement we have discussed satisfies the three conditions mentioned at the beginning of this section. First, the whole state $|\P_1,\cdots, \P_N\>$ is divided into two systems that contain $n$ and $(N-n)$ identical particles respectively, which are distinguishable by the spatial distribution and the detectable quantum states. Second, the choice of $\mathbb{I}_S$ (Eq. \eqref{IS}) depends on our way of measurement, therefore susceptible to measurements. Third, the subsystems can exist as separate systems when the reduced density matrix Eq. \eqref{partial} has only one nonzero eigenvalue.

One can understand the symmetrized partial trace from the viewpoint of Hopf algebra in the $N$-particle Hilbert space.  It was argued in Ref. \cite{balachandran2013entanglement,balachandran2013algebraic} that the partial trace method for delving into the entanglement of distinguishable particles is not applicable to the entanglement of indistinguishable particles, for partial traces on a specific Hilbert space is not equivalent to restrictions to subalgebras of obsevables on the totally symmetrized Hilbert space. However, by symmetrizing the observables, we can see that the symmetrized partial traces correspond to restrictions to subalgebras \cite{Chin}.  
  
We have shown so far that all known flaws of 1QL when analyzing the entanglement of identical particles can be overcome by defining symmetrized subsystem states as Eq. \eqref{partsym}. Therefore, we can explicitly evaluate the entanglement of a given set of identical particles with apparent wavefunction combination \footnote{This technic contrasts with NSA \cite{compagno2018dealing}, which performs all calculations with implicitly defined projection relations among states}, examining the time evolution of the particles as well. We can see that in the first quantization approach every algebra we need is derived directly from the symmetrized particle states themselves without any extra definition. Another important advantage of the first quantization approach is that the exchange symmetry of bosons and fermions naturally can be extended to a more generalized case, i.e., a mixture of symmetry and antisymmetry among different particles. While the transition amplitudes of bosons and fermions are expressed as matrix permanents (Eq. \eqref{perm}) and determinants, it is possible to consider particles whose transition amplitudes are proportional to immanants. Such kind of particles are named $immanons$ \cite{tichy2017extending}. We expect to apply the discussion in the current section to understanding the correlation of immanons in the future.

\section{From spatial coherence to the bipartite entanglement of identical particles}\label{pisc}

It is widely admitted that a set of non-interacting identical particles can generate entanglement only after their spatial wave functions overlap \cite{killoran2014extracting, franco2016quantum, sciara2017universality, paunkovic2004role} (``No quantum prediction, referring to an atom located in our laboratory, is affected by the mere presence of similar atoms in remote parts of the universe \cite{peres2006quantum}''). An interesting thought experiment that clarifies this point is given in Section II of Ref. \cite{compagno2018dealing}. 
However, as we will show in this section, a mere spatial overlap among identical particles is not a prerequisite for the entanglement, for it cannot identify the relation of particles with detectors \footnote{Both the concept of detection-level entanglement \cite{tichy2013entanglement} and the algebraic approach to the  entanglement of indistinguishability \cite{benatti2011entanglement,benatti2012bipartite,benatti2017remarks} presume that the entanglement generated from  indistinguishability depends on the experimental context, which implies not just the spatial overlap among identical particles but also between particles and dete ctors.}. 

Here we show how the spatial overlap of particles with two separate detectors determines the amount of bipartite entanglement, based on the method developed in the former section. 
The locations of two detectors define the computational basis for the spatial coherence of identical particles. We present a rigorous condition for the state to be entangled along the variation of spatial coherence. 

Our focus is on $N$ identical bosons with pseudospin up and down, which can be applied to the Bose-Einstein condensation (BEC) of cold atoms or  scattering of photons with polarizations (we can compare our results with those of  Ref. \cite{killoran2014extracting} and  Ref. \cite{giuseppe2018} in this septup).
Supposing among $N$ bosons $n$ particles are in pseudospin up ($ \uparrow$) and $(N-n)$ are in pseudospin down ($\downarrow$), the total wave function is given by
\begin{align}\label{nk}
|\P\> &= |\P_1,\P_2,\dots, \P_N\> \nn \\
&=|\p_1 \uparrow,\dots, \p_n\uparrow, \p_{n+1}\downarrow,\dots,\p_{N}\downarrow\>,
\end{align}
where $\p_j$ ($1\le j\le N$) are the spatial wave functions for the particles (note that from Eq. \eqref{totalstate} an wave function written in this form is symmetrized).


\begin{figure}[t]
	\centering
	\includegraphics[width=8.5cm]{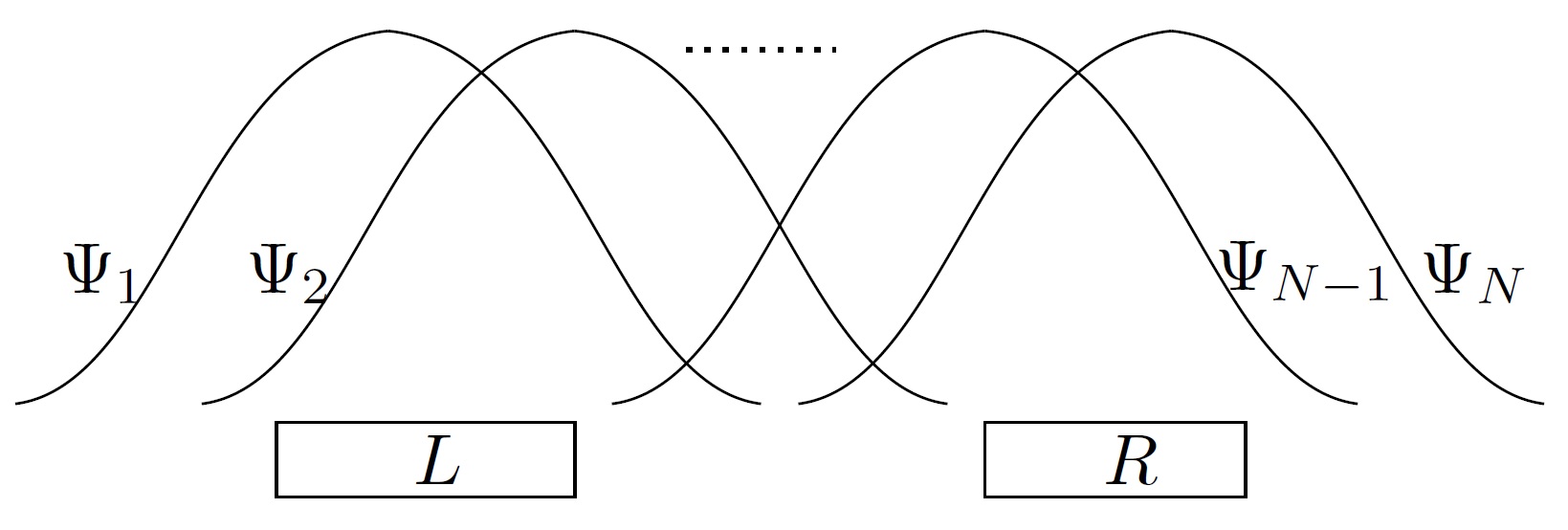}
	\caption{$N$ identical particles arrives at two detectors $L$ and $R$}
	\label{nidpart}
\end{figure}

\subsection*{Spatial coherence of identical particles}

The particles can be detected (spatially overlapped) by detector $L$, or by detector $R$, or not detected by both (see Fig. \ref{nidpart}).
Hence, we express each spatial wave function $|\p_j\>$ in the most general form  as 
\begin{align}
|\p_j\> = \sin\phi_j\big( \cos\th_j|L\> + e^{i\w_j}\sin\th_j|R\>\big)  + e^{i\g_j}\cos\phi_j|\chi_j\>,
\end{align}
where $0\le \th_j$, $\phi_j \le \pi/2$, $\<L|R\>=0$ (two detectors are far enough to be distinguished) and  $ |\chi_j\>$ are orthogonal to both $|L\>$ and $|R\>$. It is easy to see that the angles $\phi_j$ and $\th_j$ determine the probabilities that the particle arrives at $L$, or at $R$, or does not arrive at all.
Then the spatial overlap between two states $|\p_j\>$ and $|\p_k\>$ are quantified as
\begin{align}
\<\p_j|\p_k\> =& \sin\phi_j\sin\phi_k (\cos\th_j\cos\th_k + e^{i(\omega_k-\omega_j)}\sin\th_j\sin\th_k) \nn \\
&+e^{i(\gamma_k - \gamma_j)}\cos\phi_j\cos\phi_k.
\end{align}
Note that $\p_j$ and $\p_k$ can overlap even when the corresponding particles  arrive at neither of the detectors $L$ and $R$ ($\phi_j=\phi_k = 0$). For this case, it is meaningless to discuss the entanglement of the particles between modes $L$ and $R$. Therefore, we can see that the spatial overlap itself is not directly related to the entanglement of identical particles. 

Instead, we focus on the relation of particles with detectors, which can be describe with the coherence of the particles in the computational basis set $\{|L\>, |R\> \}$. 
Since the values of $\phi_j$ do not affect the physical state by the normalization after the projection, we can set $\phi_j=\pi/2$ without loss of generality and 
\begin{align}\label{generalstate}
|\p_j\> = \cos\th_j|L\> + e^{i\w_j}\sin\th_j|R\>.
\end{align}
And the density matrix $\r_j$ for the state $|\p_j\>$ is given by
\begin{align}
\r_j = 
\begin{pmatrix}
\cos^2\th_j & e^{i\w_j}\cos\th_j\sin\th_j \\
e^{-i\w_j}\cos\th_j\sin\th_j & \sin^2\th_j
\end{pmatrix}.
\end{align} 
Then  the amount of coherence $C_j$ for $|\p_j\>$ is evaluated by the off-diagonal elements \cite{baumgratz2014quantifying}. For our case, $\r_j$ has  only one independent element, $C_j$ can be directly defined as 
\begin{align}\label{coher}
C_j = 2\cos\th_j\sin\th_j.
\end{align} (if $|\p_j\>$ has more than two bases, we need to define some scalar quantities to evaluate the amount of coherence according to the axioms for coherence measures and monotones. See, e.g., Ref. \cite{streltsov2017colloquium}). $C_j$ is normalized so that $0\le C_j\le 1$. Note that $C_j=0$ when $\th_j=0$ or $\pi/2$, and  $C_j=1$ when $\th_j=\pi/4$. 

One common property of entanglement of identical particles and coherence is that they are both basis-dependent (detector dependent), which is an indirect evidence that two quantites are closely related.

\subsection*{From coherence to entanglement of $N$ identical particles with pseudospins} 
Now we examine how the coherence defined as Eq. \eqref{coher} affects the entanglement.
The observation of particles by the detectors $|R\>$ and $|L\>$ is represented as the projection of $|\P\>$ on   $\Pi_{LR}=\sum_{\a,\b}|\Phi^{\a\b}\>\<\Phi^{\a\b}|$ ($0 \le \a\le n$, and $0 \le \b \le N-n$), where
\begin{align}\label{pab}
|\Phi_{\a\b}\> &\equiv|\Phi^{\a\b}_1, \dots , \Phi^{\a\b}_N \> \nn \\ 
&= |\underbrace{L\uparrow, \dots}_{\a}, \underbrace{L\downarrow, \dots}_{\b},  \underbrace{R\uparrow, \dots}_{n-\a}, \underbrace{R\downarrow,\dots}_{N-n-\b}\>
\end{align} 
($|\Phi_{\a\b}\>$ is defined in a form that preserves the particle number of spin up and down).
Then the projected state by the detectors is given with Eq. \eqref{perm} by
\begin{align}
|\P_{LR}\> &=\sum_{q=1}^N 
\sum_{\a+\b=q}\frac{\perm(A^{\a\b})}{N!\mN_N(\Phi^{\a\b})\mN_N(\Phi)} |\Phi^{\a\b}\> \nn \\
&\equiv \sum_{q=1}^N |\Phi_q\>, 
\end{align}
where $A^{\a\b}$ is a $N\times N$ matrix whose entries are given by $(A^{\a\b})_{jk}= \<\Phi^{\a\b}_j|\P_k\>$. We have introduced $|\Phi_q\>$ so as to group the projected states into those which have equal particle number distributions between two systems, to regard the particle number superselection rule \cite{wiseman2003entanglement}. 
Using  $\<L|\p_j\> = \cos\th_j$ and $\<R|\p_j\>= e^{i\w_j}\sin\th_j$, $A^{\a\b}$ is expressed in the matrix form as
\[
\renewcommand\arraystretch{1.3}
\mleft[
\begin{array}{c|c|c|c}
C_{\a} & 0 & S_{n-\a} & 0 \\
\hline
0 & C_{\b} & 0 & S_{N-n-\b}
\end{array}
\mright]
\]
where

	\begin{align}\label{hab}
	&C_\a= 
	\underbrace{\begin{pmatrix}
		\cos\th_1  &\cdots& \cos\th_1 \\
		\vdots& \vdots &\vdots \\
		\cos\th_n &\cdots & \cos\th_n
		\end{pmatrix}
	}_{\a}, 
	\nn \\
	&C_{\b} = \underbrace{\begin{pmatrix}
		\cos\th_{n+1}  &\cdots& \cos\th_{n+1} \\
		\vdots	& \vdots &\vdots \\
		\cos\th_N &\cdots & \cos\th_N
		\end{pmatrix}
	}_{\b}, \nn \\ 
	& S_{n-\a }= 
	\underbrace{\begin{pmatrix}
		e^{i\w_1}\sin\th_1  &\cdots& e^{\w_1}\sin\th_1 \\
		\vdots& \vdots &\vdots \\
		e^{i\w_n}\sin\th_n &\cdots & e^{i\w_n}\sin\th_n
		\end{pmatrix}
	}_{n-\a} ,
	\nn \\
	 &S_{N-n-\b}= 
	\underbrace{\begin{pmatrix}
		e^{i\w_{n+1}}\sin\th_{n+1}  &\cdots& e^{i\w_{n+1}}\sin\th_{n+1} \\
		\vdots& \vdots &\vdots \\
		e^{i\w_{N}}\sin\th_N &\cdots & e^{i\w_{N}}\sin\th_N
		\end{pmatrix}
	}_{N-n-\b}.
	\end{align} 

To obtain the entanglement of $|\Phi_q\>$, we need to partial trace $|\P_{q}\>\<\P_{q}|$ over the identity matrix $\mathbb{I}_L^q$ of the subsystem $L$ with 
\begin{align}
\mathbb{I}_L^q = \sum_{\a=0}^{q}|\underbrace{L\uparrow, \dots}_{\a}, \underbrace{L\downarrow, \dots}_{q-\a }\>\<\underbrace{L\uparrow, \dots}_{\a}, \underbrace{L\downarrow, \dots}_{q-\a }|.
\end{align}  
Then from the obtained reduced density matrix $\r^q_{R} = \tr_L(\mathbb{I}_L^q|\P_{q}\>\<\P_{q}|)$, we can compute the amount of the entanglement for $|\Phi_q\>$, which we denote as $E(\Phi_q)$. 
The average entanglement of $|\P_{LR}\>$ can be evaluated as
\begin{align}\label{averageent}
E(\P_{LR}) = \sum_{q}p_{q}E(\P_q).
\end{align} This is \emph{the entanglement of particles} defined in Ref. \cite{wiseman2003entanglement}, of which the local operations in $L$ and $R$ are performed by two detectors after measuring (postselecting) the particle numbers \cite{amico2008entanglement} (the name ``entanglement of particles'' means that it is about the correlation among modes with the equal particle  number distribution, not taking unaccesible individual particles as subsystem).

The spatial coherence Eq. \eqref{coher} determines whether a given state of identical bosons is entangled or separable, which can be stated as the following theorem: 
\begin{theorem} (Spatial coherence criterion for entanglement) \label{eandc} Consider the case when
 	an $N$ identical particle state $|\P\> = |\P_1,\P_2,\dots, \P_N\>=|\p_1 \uparrow,\dots, \p_n\uparrow, \p_{n+1}\downarrow,\dots,\p_{N}\downarrow\>$ is detected by two detectors $|L\>$ and $|R\>$ that locate far enough to each other, i.e., $\<L|R\>=0$. Then if the spatial coherence of idential bosons satisfy $C_1=C_2=\cdots =C_{n}=0$ or  $C_{n+1}=C_{n+2}=\cdots =C_{N}=0$, the projected state is separable.
\end{theorem}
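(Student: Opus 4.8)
\emph{Proof plan.} The plan is to show that, under the stated condition, the post-selected state $|\P_{LR}\>$ carries no entanglement in any fixed total-occupation sector, so that the average entanglement \eqref{averageent} vanishes and the state is therefore separable. Concretely, I would invoke the separability criterion noted earlier in the paper (the reduced density matrix $\r^q_R$ has a single nonzero eigenvalue) and reduce the whole claim to proving that each sector state $|\Phi_q\>$ is a product state across the $L|R$ bipartition. By the obvious symmetry $\uparrow\leftrightarrow\downarrow$, $n\leftrightarrow N-n$, and $\a\leftrightarrow\b$, it suffices to treat the case $C_1=\cdots=C_n=0$; the case $C_{n+1}=\cdots=C_N=0$ then follows verbatim.

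The key structural input is the block form of the amplitude matrix in \eqref{hab}. Because the internal states are orthogonal, every entry coupling a spin-up detector slot to a spin-down particle (and conversely) vanishes, so $A^{\a\b}$ is block diagonal in the spin label and its permanent factorizes, $\perm(A^{\a\b})=P_\uparrow(\a)\,P_\downarrow(\b)$, where $P_\uparrow(\a)$ involves only $\{\th_j,\w_j\}_{j\le n}$ and $P_\downarrow(\b)$ only the remaining angles. Within the up-block the $\a$ columns associated with $L$ all equal $(\cos\th_1,\dots,\cos\th_n)^{\top}$ and the $n-\a$ columns associated with $R$ all equal $(e^{i\w_1}\sin\th_1,\dots,e^{i\w_n}\sin\th_n)^{\top}$; hence $P_\uparrow(\a)$ is, up to a combinatorial factor from the repeated columns, the elementary-symmetric-type sum $\sum_{|S|=\a}\prod_{j\in S}\cos\th_j\prod_{j\notin S}e^{i\w_j}\sin\th_j$ over row subsets $S$ of size $\a$.

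The crux is then to exploit $C_j=2\cos\th_j\sin\th_j=0$ for $j\le n$, which forces each such $\th_j\in\{0,\pi/2\}$, so that $\cos\th_j$ and $\sin\th_j$ take only the values $0$ and $1$. Let $m$ be the number of spin-up particles with $\th_j=0$ (those deterministically sent to $L$). In the subset sum above a term survives only if every $j\in S$ has $\cos\th_j\neq0$ and every $j\notin S$ has $\sin\th_j\neq0$, which is possible precisely when $S$ is the set of those $m$ particles, and hence only when $\a=m$. Thus $P_\uparrow(\a)\neq0$ if and only if $\a=m$, i.e. the spin-up occupation at the two detectors is completely deterministic.

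Feeding this back into the definition of $|\Phi_q\>$, the sum over $\a+\b=q$ collapses to the single term $\a=m$, $\b=q-m$, so $|\Phi_q\>\propto|\Phi^{m,\,q-m}\>$ for every admissible $q$. It then remains to note that a single symmetrized occupation ket $|\Phi^{\a\b}\>$ of the form \eqref{pab}, carrying sharp occupation numbers in the mutually orthogonal modes $L$ and $R$, factorizes as $|L\text{-part}\>\otimes|R\text{-part}\>$; hence $\r^q_R$ is rank one and $E(\Phi_q)=0$ for each $q$, giving $E(\P_{LR})=0$ in \eqref{averageent} and establishing separability. I expect the main obstacle to be the bookkeeping that makes the permanent factorization and the ``single surviving $\a$'' statement fully rigorous (tracking the repeated-column combinatorics and the normalizations $\mN_N$), together with a clean proof of the auxiliary fact that a sharp-occupation symmetrized state across orthogonal modes is a genuine product state.
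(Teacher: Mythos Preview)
Your proposal is correct and follows essentially the same route as the paper: both arguments use the block structure of $A^{\a\b}$ to factor the permanent into an up-spin and a down-spin piece, observe that $C_j=0$ for $j\le n$ forces each $\th_j\in\{0,\pi/2\}$, and conclude that only a single value of $\a$ can give a nonvanishing up-spin permanent, so each sector $|\Phi_q\>$ contains at most one term $|\Phi^{\a\b}\>$ and is therefore separable. Your presentation is in fact a bit more explicit than the paper's---you write out the elementary-symmetric subset sum for $P_\uparrow(\a)$ and identify the unique surviving $\a$ as the count $m$ of up-spin particles with $\th_j=0$, whereas the paper reaches the same conclusion by rearranging rows of $X_{n\times n}$ into a $\mathbf{1}\oplus\Omega$ block form and asserting that shifting $\a$ away from this value kills the permanent; you also spell out the final step (that a single $|\Phi^{\a\b}\>$ with sharp $L/R$ occupations is a genuine product across the bipartition), which the paper leaves implicit.
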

\begin{proof}
	What we need to prove is that for all pairs of $(\a,\b)$ that satisfy $\a+\b= q$, only one pair has nonzero perm$(A^{\a\b})$, which corresponds to the separable state. 
	Since a matrix permanant is invariant under the exchange of rows, we can write
	\[
	\renewcommand\arraystretch{1.3}
	\perm (A^{\a\b})= 
	\perm \mleft[
	\begin{array}{c|c|c|c}
	C_{\a} &  S_{n-\a} & 0 & 0  \\
	\hline
	0 & 0 & C_{\b} & S_{N-n-\b}
	\end{array}
	\mright] 
	\]
	\[
	\renewcommand\arraystretch{1.3}
	\phantom{sdfdfff}\equiv 
	\perm \mleft[
	\begin{array}{c|c}
	X_{n \times n } & 0   \\
	\hline
	0  & Y_{(N-n)\times (N-n)}
	\end{array}
	\mright].
	\]
	
	$X_{n\times n}$ is a $(n\times n)$- matrix and $Y_{(N-n)\times (N-n)}$ is a $\big((N-n)\times (N-n) \big)$- matrix.
	
	First consider $C_j= 2\sin\th_j\cos\th_j = 0$ for $1\le j\le n$. This condition restricts the form of $X_{n\times n}$ so that the absolute values of the elements become $0$ or $1$.
	Suppose the permanant is nonzero  when $\a= x$. Then since a matrix permanant is invariant under the exchange of rows, without loss of generality we can set the matrix $X_{n\times n}$ as
	
	\[
	\renewcommand\arraystretch{1.3} X_{n\times n} = 
	\mleft[
	\begin{array}{c|c}
	\mathbf{1}_{x\times x} & \mathbf{0} \\
	\hline
	\mathbf{0} & \Omega_{(n-x)\times (n-x)}
	\end{array}
	\mright]
	\]
	where $(\mathbf{1}_{x\times x})_{jk} = 1$ and $ (\Omega_{(n-x)\times(n-x)})_{jk} = e^{i\w_j}$ for all $j$ and $k$. 
	Then for the fixed values of $\cos\th_j$ and $\sin\th_j$ ($1\le j \le n$), it is direct to see that the matrix permanant is zero for all $\a=x + a$ ($a\neq 0$).   This means that
	for a given $q$, there exists only one $(\a,\b)$ that contributes a nonzero amplitude. Therefore, the quantum state is separable. Setting $C_j= 2\sin\th_j\cos\th_j = 0$ for $n+1\le j\le N$ affects the form of $Y_{(N-n)\times (N-n)}$, which also result in a separable quantum state.
	
\end{proof}
 Note that the  inverse is not true, i.e., not all separable states have the spatial coherence with the above restrictions. It is because the phases $\w_j$ ($j=1,\cdots ,N$) can also affect the amount of entanglement. We can find such an example for $N=3$ case, which will be discussed  in Section \ref{example}.
\\

So far, we have discussed the quantitative relation between coherence and entanglement of given identical particles. In a sentence, nonzero coherence is a prerequisite for nonzero entanglement. As mentioned at the beginning of this section, a mere spatial overlap among identical particles does not guarantee the detection of entanglement, for the spatial relation between particles and detectors (which determines the amount of coherence) is a crucial factor as well. Even if particles spatially overlap, no entanglement is detected by placing any detector out of the overlapped region (see Eq. \eqref{coher}). This fact shows that the entanglement of particles has the property of \emph{detector-level entanglement} (introduced in Ref.~\cite{tichy2013entanglement}), which emerges from the incorporation of the particle wavefunction and the measurement process.  Appendix \ref{Potentiality} provides some interpretational discussion on this viewpoint.

\section{The simplest nontrivial examples}\label{example}

In this section we study two simplest but nontrivial cases ($N=2$ and $3$), with which one can understand the physical implications of Theorem \ref{eandc} more concretely. 
\subsection*{$N=2$}

\begin{figure}[t]
	\centering
	\includegraphics[width=6cm]{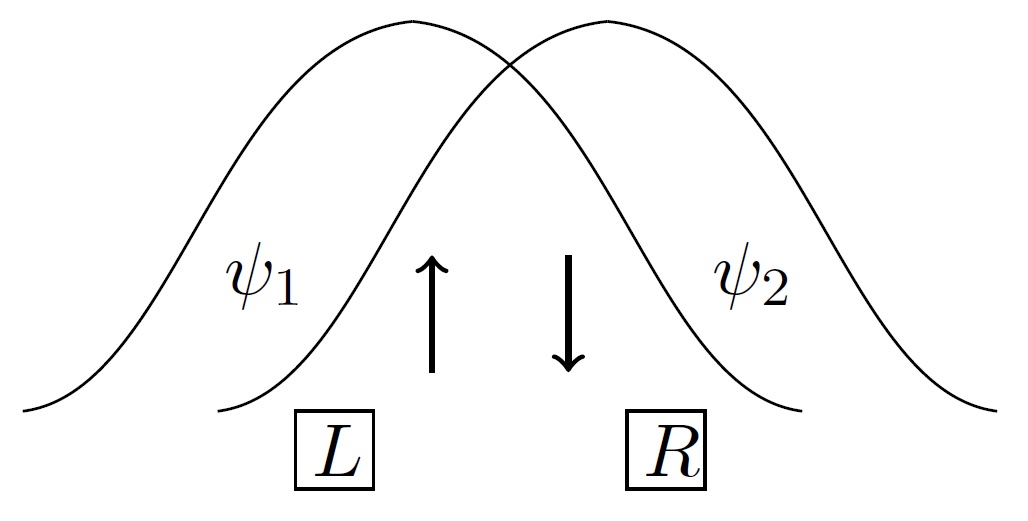}
	\caption{Two identical particles with spatial overlap. Two detector $L$ and $R$ must locate in the overlapping region of the particles to detect the entanglement.}
	\label{twooverl}
\end{figure}

First we analyze the simplest case, i.e., two bosons with internal pseudospin.
Suppose we have two identical bosons, one of which is initially in a spatial state $\p_1$ with spin up and the other is in a spatial state $\p_2$ with spin down, i.e., $\P_1 = (\p_1 \uparrow)$  and $\P_2 = (\p_2 \downarrow)$ (it is trivial to see that two particles in the same internal spin has no entanglement).
Then the total wave function is given by
\begin{align}\label{updated}
|\P_1,\P_2\> = \frac{1}{\sqrt{2}}\Big[ |\p_1,\uparrow\>_A|\p_2,\downarrow\>_B + |\p_2,\downarrow\>_A|\p_1,\uparrow\>_B \Big]
\end{align} (Fig. \ref{twooverl}).
To see the entanglement of this state, 
we prepare two detectors located distinctively, i.e., one is at a spatial region $L$ and the other is at $R$ with $\<L|R\>=0$.
Then using Eq. \eqref{hab}, the projected state is given by 
\begin{align}\label{n=2}
&|\P_{LR}\> \nn \\
&= \cos\th_1\cos\th_2|L\uparrow, L\downarrow\>  \nn \\ &\quad  + e^{i\w_2}\cos\th_{1}\sin\th_{2}|L\uparrow,R\downarrow\> + e^{i\w_1}\sin\th_{1}\cos\th_{2}|L\downarrow,R\uparrow\> \nn \\
 & \quad + e^{i(\w_1+\w_2)}\sin\th_1\sin\th_2|R\uparrow, R\downarrow\>.
\end{align}
Taking the partial trace with
$\mathbb{I}^1_L = |L\uparrow\>\<L\uparrow|+|L\downarrow\>\<L\downarrow|$ over $L$, the reduced density matrix $\r^1_R$ is given by
\begin{align}
\r^1_R&= \frac{\cos^2\th_1\sin^2\th_2 |R\downarrow\>\<R\downarrow|  + \sin^2\th_1\cos^2\th_2 |R\uparrow\>\<R\uparrow|}{\cos^2\th_2\sin^2\th_1+\cos^2\th_1\sin^2\th_2} \nn \\
&\equiv|R\>\<R|\otimes  ( \la_1^2|\uparrow\>\<\uparrow| + \la_2^2|\downarrow\>\<\downarrow|)
\end{align} (the exact same form of reduce density matrix is also obtained in Ref. \cite{giuseppe2018} using NSA).  
We can obtain entaglement measures with $\r_L^1$. e.g., von Neumann entropy ($= -\sum_{i}\la_i^2\log_2(\la_i^2)$) or entanglement concurrence ($= 2\la_1\la_2$).
    
Therefore, for both $\la_1$ and $\la_2$ not to be zero (which corresponds to the entangled state), the two detectors must locate in the overlapped region of the two particles (as in Fig. \ref{twooverl}).
The transformed state $|\P_1,\P_2\>$ has both indistinguishability and nonzero coherence, by which we can obtain non-trivial entanglement of particle now. This kind of entanglement can be used as a resource of quantum teleportation and bell inequality violation \cite{giuseppe2018,paunkovic2004role}.


Moreover, for this simplest case, the coherence of two spatial distributions $\p_{1}$ and $\p_2$ determine the amount of entanglement completely, i.e., independent of $\w_1$ and $\w_2$.  We choose concurrence as the entanglement measure for now. Indeed, with Eq. \eqref{averageent}, the average concurrence is written with $C_j$ as
\begin{align}\label{entcoh}
E_c(|\P_{LR}\>) &=  \cos\th_1\sin\th_1\cos\th_2\sin\th_2\nn \\
&= \frac{1}{4}C_1C_2.
\end{align}
Therefore the amount of entanglement for $N=2$ is completely determined by $C_1$ and $C_2$. $E_c(|\P_{LR}\>)$ has the maximal value  when $C_1=C_2=1$.
Eq. \eqref{entcoh} shows that the entanglement of particles vanishes if one of the coherences vanishes, which agrees with Theorem \ref{eandc}. This means that even if two identical particles are spatially overlapped, no entanglement is extracted when one detector places out of the overlapped region.

\subsection*{$N=3$}

Now  we move on to the 3 particle state, of which the only nontrivial pseudospin distribution is given by $|\P_1,\P_2,\P_3\>$ with $(\P_1,\P_2,\P_3)= (\p_{1}\uparrow,\p_{2}\uparrow,\p_3\downarrow)$. 
As we will see, this is the simplest case that other than spatial coherence the relative phases affect the amount of entanglement. 
Using Eq. \eqref{hab}, we can see that the nonzero $A^{\a\b}$ are ($A^{21},A^{20},A^{11},A^{01},A^{10},A^{00}$). The permanant of those matrices are given by
\begin{widetext}
\begin{align}
&\perm(A^{21})=2\cos\th_{1}\cos\th_2\cos\th_3, \nn \\ &\perm(A^{20})= 2e^{i\w_3}\cos\th_{1}\cos\th_2\sin\th_3, \quad 
\perm(A^{11})= (e^{i\w_2}\cos\th_{1}\sin\th_2 + e^{i\w_1}\sin\th_1\cos\th_2)\cos\th_3, \nn \\
&\perm(A^{01}) = 2e^{i(\w_1+\w_2)}\sin\th_{1}\sin\th_2\cos\th_3, \quad
\perm(A^{10}) = e^{i\w_3}(e^{i\w_1}\sin\th_{1}\cos\th_2 + e^{i\w_2}\cos\th_1\sin\th_2 )\sin\th_3, \nn \\
&\perm(A^{00}) = 2e^{i(\w_1+\w_2+\w_3)}\sin\th_{1}\sin\th_2\sin\th_3,
\end{align}  
Therefore, $|\P_{LR}\>$ is given by
\begin{align}\label{n3plr}
|\P_{LR}\>
 =& \cos\th_{1}\cos\th_2\cos\th_3|L\uparrow,L\uparrow,L\downarrow\>\nn \\
 &+ \big(e^{i\w_3} \cos\th_{1}\cos\th_2\sin\th_3 |L\uparrow,L\uparrow,R\downarrow\>+\frac{1}{\sqrt{2}}(e^{i\w_2}\cos\th_{1}\sin\th_2 + e^{i\w_1}\sin\th_1\cos\th_2)\cos\th_3 |L\uparrow,L\downarrow,R\uparrow\> \big)\nn \\
 & + \big(e^{i(\w_1+\w_2)}\sin\th_{1}\sin\th_2\cos\th_3 |R\uparrow,R\uparrow,L\downarrow\> +\frac{1}{\sqrt{2}}e^{i\w_3}(e^{i\w_1}\sin\th_{1}\cos\th_2 + e^{i\w_2}\cos\th_1\sin\th_2)\sin\th_3|R\uparrow,R\downarrow,L\uparrow\>\big) \nn \\
 &+ e^{i(\w_1+\w_2+\w_3)}\sin\th_{1}\sin\th_2\sin\th_3|R\uparrow,R\uparrow,R\downarrow\>
\end{align}
(note that this is not normalized).

We again choose concurrence as the entanglement measure. Considering that $|s_1,s_2\>$ ($s_1,s_2=1,2$) construct a triplet, the concurrence can be computed using Eq. (12) of Ref. \cite{chin2017coherence}. From Eq. \eqref{averageent}, the average concurrence $E_{c}(|\P_{LR}\>) = \sum_{q}p_q E_c(\P_q)$ is given by

\begin{align}
E_c(|\P\>_{LR}) =& (\cos^2\th_{1}\sin^2\th_2+\sin^2\th_1\cos^2\th_2 + 2\cos(\w_1-\w_2)\cos\th_1\sin\th_1\cos\th_2\sin\th_2)^{\frac{1}{2}} \nn \\
&\times \frac{\sin\th_3\cos\th_3(\cos\th_1\cos\th_2 + \sin\th_1\sin\th_2)}{N(\th,\w)},	
\end{align}
where $N(\th,\w)$ is a constant for normalizing Eq. \eqref{n3plr}.
Using Eq. \eqref{coher}, the above equation is rewritten as
\begin{align}
E_c(|\P\>_{LR}) = \frac{1}{N(C_i,\w)}C_3\Big(1+C_1C_2\cos(\w_1-\w_2) \pm  \sqrt{(1-C_1^2)(1-C_2^2)}\Big)^{\frac{1}{2}}\Big(1+\frac{1}{4}C_1C_2 \mp  \sqrt{(1-C_1^2)(1-C_2^2)}\Big)^{\frac{1}{2}} .
\end{align} ($\pm$ in the first parenthesis and $\mp$ in the second parenthesis are determined by the relative region of $\th_1$ and $\th_2$. In other words, $(+, -)$-sign is when both $\th_1$ and $\th_2$ are larger or smaller than $\pi/4$ and $(-,+)$ is when $\th_{1}$ is larger than $\pi/4$ and $\th_2$ is smaller than $\pi/4$ (or the other way around)).
\end{widetext}
It is insightful to compare the above result with Theorem \ref{eandc}. 
As stated in the theorm, if $C_1=C_2=0$ or $C_3=0$, the detected entanglement $E_c(\P_{LR})$ is zero. On the other hand, we can easily find a counterexample which shows that the inverse is not true, for the role of relative phases $\w_j$ are not trival now. As such an example, suppose that $C_1=C_2=1 \neq 0$ and $C_{3} \neq 0$. For this case, the total state can still be unentangled when  $\w_1-\w_2= \pi$ is satisfied.

\section{Reinterpreting the entanglement extraction protocol in Ref. \cite{killoran2014extracting}}\label{killoran}

Among various approaches to the entanglement of identical particles, Cavalcanti et al. \cite{cavalcanti2007useful}  and Killoran et al. \cite{killoran2014extracting} proposed an interesting viewpoint to match the entanglement by the symmterization of particles (based on 1QL) and the mode entanglement (based on 2QL). In this section we review the viewpoint and method of Ref. \cite{killoran2014extracting} (which focused on the bosonic case) and show that our entanglement detection process discussed so far is a generalization of the work.

As we have mentioned in Section \ref{1QL}, Eq. \eqref{totalstate} has a mathematically equivalent form with entangled states. Let us consider $N$ photons in the same mode with internal pseudospins. With $n$ particles in spin up and $(N-n)$ particles in spin down, the state is written with Eq. \eqref{nk} as
\begin{align}\label{symmstate}
|\P\> &= |\underbrace{\p\uparrow,\cdots \p\uparrow}_{n},\underbrace{\p\downarrow,\cdots,\p\downarrow}_{N-n}\> \nn \\
&= \frac{1}{\sqrt{\binom{N}{n}}}\Big[ \sum_{\s}|\uparrow\>_{A_{\s(1)}}\cdots |\uparrow\>_{A_{\s(n)}}|\downarrow\>_{A_{\s(n+1)}}\cdots |\downarrow\>_{A_{\s(N)}}\Big] \nn \\
&\equiv \frac{1}{\sqrt{\binom{N}{n} }} \mathcal{S} \Big[ |\uparrow\>_{A_1}\cdots |\uparrow\>_{A_n}\> |\downarrow\>_{A_{n+1}}\cdots |\downarrow\>_{A_{N}} \Big],
\end{align}
where the spatial wave function $\p$ is abbreviated from the second line. 
Then we can arbitrarily group the $N$ particles into two subsystems ($X,Y)$, in which $X$ ($Y$) has $N_X$ ($N_Y$) particles with pseudolabels $A_1,\cdots, A_{N_X}$ ($A_{N_X+1},\cdots ,A_N $). Along with this bipartition, Eq. \eqref{symmstate} can be rewritten as
\begin{align}
|\P\> = \frac{1}{\sqrt{\binom{N}{n}}} \sum_{n_X+n_Y=n} \Big[ \mathcal{S}|v_{n_X}\> \Big] \Big[ \mathcal{S}|v_{n_Y}\> \Big],
\end{align}  
where $|v_{n_X}\>$$=$$ |\uparrow\>_{A_1}\cdots|\uparrow\>_{A_{n_X}}|\downarrow\>_{A_{n_X+1}}\cdots |\downarrow\>_{A_{N_X}}$.
This is a Schmidt decomposed form of a bipartite quantum state, with coefficients  $\la_{n_X,n_Y} = \sqrt{\binom{N_X}{n_X}\binom{N_Y}{n_Y}/ \binom{N}{n} }$. Now this state is split into two different modes $|L\>$ and $|R\>$, so that $\p$ evolves into
\begin{align}
|\p\> \to |\p'\>= r|L\> + s|R\>. \quad (|r|^2 + |s|^2 = 1)
\end{align} 

\begin{figure}[t]
	\centering
	\includegraphics[width=7cm]{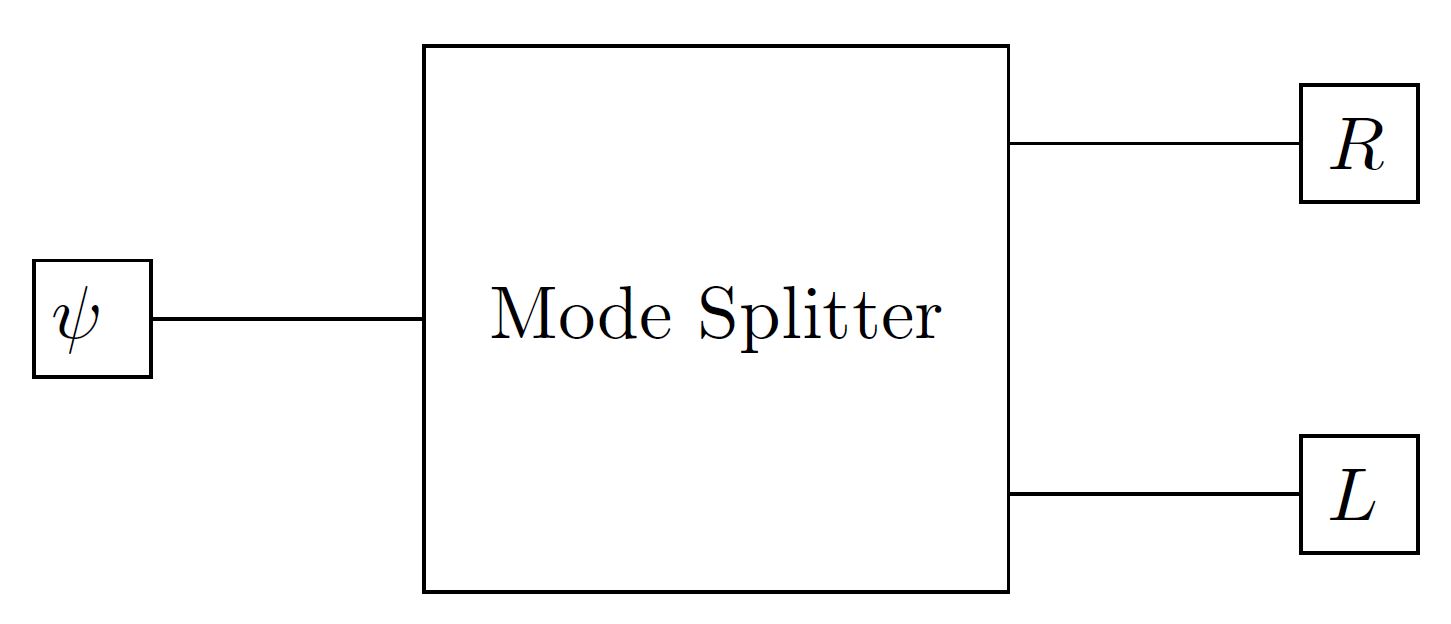}
	\caption{Mode splitting process of $N$ identical particles into two distinctive output modes $L$ and $R$. The input mode locates in position $ \p$, in which $N$ identical particles overlap with internal pseudospins. After the splitting, $N_L$ ($N_R$) particles are found in mode $L$ ($R$). }
	\label{samemode}
\end{figure}

This mode splitting process in optics is a beam splitter transformation, and in Bose-Einstein condensation (BEC) a tunneling operation (this process is represented pictorially as Fig. \ref{samemode}. The input mode is locally in position $ \p$, in which $N$ identical particles overlap with internal pseudospins). Then for the transformed state, after projection onto local particle numbers ($N_L, N_R $) = ($N_X,N_Y$), the Schmidt form of the input state is equal to that of the projected output state (the Schmidt equivalence of particle and mode states \cite{killoran2014extracting}). For example, $3$ identical particle initial state  with two of them in spin up and one in spin down is given by
\begin{align}\label{3idstate}
|2,1\> = \frac{1}{\sqrt{3}} \big[ |\uparrow\>_{1} |\uparrow\>_{2} |\downarrow\>_{3} + |\uparrow\>_{2} |\uparrow\>_{3} |\downarrow\>_{1} + |\uparrow\>_{3} |\uparrow\>_{1} |\downarrow\>_{2}  \big].
\end{align}
Suppose we allot particles 1 and 2 in $X$ and particle 3 in $Y$.  Then Eq. \eqref{3idstate} is rewritten as
\begin{align}\label{3sd}
|2,1\> =  \frac{1}{\sqrt{3}} \Big[& \Big( |\uparrow\>_1|\uparrow\>_2\Big)_X\Big( |\downarrow\>\Big)_Y \nn \\
&+  \sqrt{2}\Big(\frac{1}{\sqrt{2}}\mathcal{S}[|\uparrow\>_1|\downarrow\>_2] \Big)_X\Big( |\uparrow\>_Y \Big)\Big]. 
\end{align}
The equation has a Schmidt decomposed form with coefficients $(\la_{2,0}, \la_{1,1}) = (\sqrt{1/3},\sqrt{2/3})$. The output state after a mode splitting is easily obtained using Eq. \eqref{n3plr}. By setting  $\th_i=\th$ and $\w_i= \w$ in Eq. \eqref{n3plr}, the output state $|\P_o\>$ is given by
\begin{align}
|\P_{o}\> = &\cos^3\th|L\uparrow,L\uparrow,L\downarrow\>\nn \\
& + e^{i\a} \cos^2\th\sin\th \big[|L\uparrow,L\uparrow,R\downarrow\> +\frac{1}{\sqrt{2}}|L\uparrow,L\downarrow,R\uparrow\>\big] \nn \\
&+ e^{i\a}\sin^2\th \cos\th \big[|R\uparrow,R\uparrow,L\downarrow\> +\frac{1}{\sqrt{2}}|R\uparrow,R\downarrow,L\uparrow\>\big] \nn \\
&+ \sin^3\th|L\uparrow,L\uparrow,L\downarrow\>,
\end{align}which is equivalent to Eq. (2) of Ref. \cite{killoran2014extracting}.
For $(N_L,L_R) = (2,1)$ or $(1,2)$, we have $(\la_{2,0},\la_{1,1}) =(\sqrt{1/3},\sqrt{2/3})$ (the same Schmidt coefficients with those of Eq. \eqref{3sd}).

Killoran et al. \cite{killoran2014extracting} understood this isomorphism as the mapping of the input particle entanglement into the output mode entanglement.  This interpretation implies that the entanglement of identical particles by exchange symmetry can be extracted by passive mode splittings. In other words, the mathematical entanglement structure of many bosons is accessible with distinguishable mode subsystems.

\begin{figure}[t]
	\centering
	\includegraphics[width=6cm]{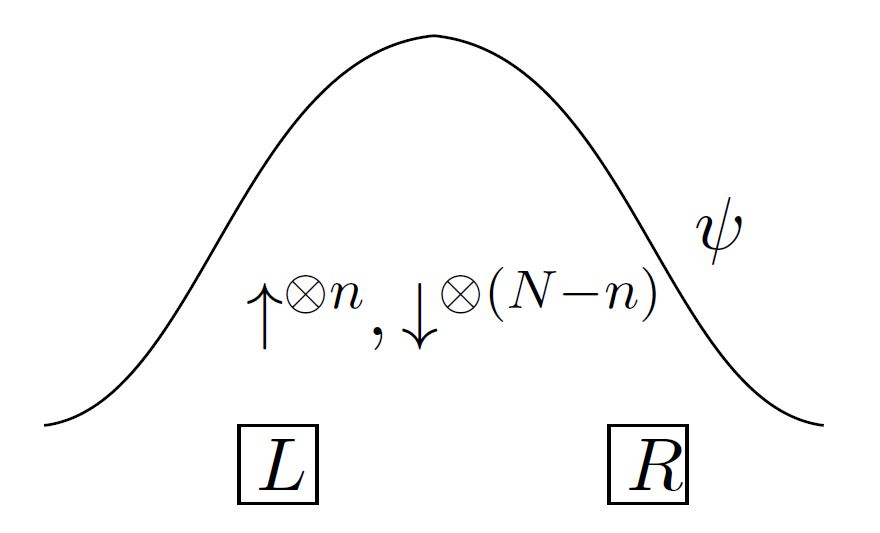}
	\caption{$n$- bosons with spin up and $(N-n)$- bosons with spin down. All $N$ bosons have an equal spatial wave function $\p$.}
	\label{overlap}
\end{figure}

This isomorphism can be restated with $N$ identical bosons that completely overlap in space and two distinguishable detectors $|L\>$ and $|R\>$ (Fig. \ref{overlap}): \emph{if the relation between the spatial wave fuction of completeley overlapping identical particles ($|\p\>$) and two detectors ($|L\>$ and $|R\>$) are given by $|\p\> = r|L\> +s|R\>$ ($|r|^2 + |s|^2=1$), there exists Schmidt equivalence of a symmetrized particle state in 1QL and a mode state in 2QL.} It is obvious to see that this isomorphism is disrupted when the particles just partially overlaps. For example, Eq. \eqref{n3plr} does not keep the input Schmidt decomposed form in general (when $\p_i$ are not equal to each other).

\section{Conclusions}\label{conclusions}
 
In this work, we formalized the first quantization approach to describe the bipartite entanglement of identical particles and proposed a criterion for unentangled states when the identical particles are spatially coherent. When $N=2$ the amount of entanglement is completely determined by coherence, and when $N\ge 3$ the relative phases also affect the entanglement. We also reinterpreted the entanglement extraction protocol of identical particles \cite{killoran2014extracting} from the viewpoint of the detector location and showed that it is a particular case when the identical particles completely overlap in space. 
 
Even if our current analysis focused on bipartite correlations, it can be extended to more general multipartite cases.  
We expect the quantitative method suggested here would apply to many quantum processes in which the entanglement of identical particles plays a central role. For example, the spin squeezing test for entanglement has remained in a qualitative domain so far \cite{dalton2017quantum2}, which would be understood better with our method containing the relation of entanglement and spatial coherence. Also, 1QL we formalized here can be applied to understanding the correlation of identical immanons \cite{tichy2017extending} (the particles of which the scattering amplitudes are proportional to immanants, not permanents (bosons) or determinants (fermions)).

\section*{Acknowledgements}

The advice of the anonymous referee helped us to sharpen our interpretation of the technical results.
This work is supported by Basic Science Research Program through the National Research Foundation of Korea (NRF) funded by the Ministry of Education, Science and Technology (NRF-2015R1A6A3A04059773).

\bibliography{firstquanti}

\begin{thebibliography}{52}%
\makeatletter
\providecommand \@ifxundefined [1]{%
 \@ifx{#1\undefined}
}%
\providecommand \@ifnum [1]{%
 \ifnum #1\expandafter \@firstoftwo
 \else \expandafter \@secondoftwo
 \fi
}%
\providecommand \@ifx [1]{%
 \ifx #1\expandafter \@firstoftwo
 \else \expandafter \@secondoftwo
 \fi
}%
\providecommand \natexlab [1]{#1}%
\providecommand \enquote  [1]{``#1''}%
\providecommand \bibnamefont  [1]{#1}%
\providecommand \bibfnamefont [1]{#1}%
\providecommand \citenamefont [1]{#1}%
\providecommand \href@noop [0]{\@secondoftwo}%
\providecommand \href [0]{\begingroup \@sanitize@url \@href}%
\providecommand \@href[1]{\@@startlink{#1}\@@href}%
\providecommand \@@href[1]{\endgroup#1\@@endlink}%
\providecommand \@sanitize@url [0]{\catcode `\\12\catcode `\$12\catcode
  `\&12\catcode `\#12\catcode `\^12\catcode `\_12\catcode `\%12\relax}%
\providecommand \@@startlink[1]{}%
\providecommand \@@endlink[0]{}%
\providecommand \url  [0]{\begingroup\@sanitize@url \@url }%
\providecommand \@url [1]{\endgroup\@href {#1}{\urlprefix }}%
\providecommand \urlprefix  [0]{URL }%
\providecommand \Eprint [0]{\href }%
\providecommand \doibase [0]{http://dx.doi.org/}%
\providecommand \selectlanguage [0]{\@gobble}%
\providecommand \bibinfo  [0]{\@secondoftwo}%
\providecommand \bibfield  [0]{\@secondoftwo}%
\providecommand \translation [1]{[#1]}%
\providecommand \BibitemOpen [0]{}%
\providecommand \bibitemStop [0]{}%
\providecommand \bibitemNoStop [0]{.\EOS\space}%
\providecommand \EOS [0]{\spacefactor3000\relax}%
\providecommand \BibitemShut  [1]{\csname bibitem#1\endcsname}%
\let\auto@bib@innerbib\@empty
\bibitem [{\citenamefont {Einstein}\ \emph {et~al.}(1935)\citenamefont
  {Einstein}, \citenamefont {Podolsky},\ and\ \citenamefont
  {Rosen}}]{einstein1935can}%
  \BibitemOpen
  \bibfield  {author} {\bibinfo {author} {\bibfnamefont {A.}~\bibnamefont
  {Einstein}}, \bibinfo {author} {\bibfnamefont {B.}~\bibnamefont {Podolsky}},
  \ and\ \bibinfo {author} {\bibfnamefont {N.}~\bibnamefont {Rosen}},\
  }\href@noop {} {\bibfield  {journal} {\bibinfo  {journal} {Physical review}\
  }\textbf {\bibinfo {volume} {47}},\ \bibinfo {pages} {777} (\bibinfo {year}
  {1935})}\BibitemShut {NoStop}%
\bibitem [{\citenamefont {Bell}(2004)}]{bell2004speakable}%
  \BibitemOpen
  \bibfield  {author} {\bibinfo {author} {\bibfnamefont {J.~S.}\ \bibnamefont
  {Bell}},\ }\href@noop {} {\emph {\bibinfo {title} {Speakable and unspeakable
  in quantum mechanics: Collected papers on quantum philosophy}}}\ (\bibinfo
  {publisher} {Cambridge university press},\ \bibinfo {year}
  {2004})\BibitemShut {NoStop}%
\bibitem [{\citenamefont {Bennett}\ \emph {et~al.}(1993)\citenamefont
  {Bennett}, \citenamefont {Brassard}, \citenamefont {Cr{\'e}peau},
  \citenamefont {Jozsa}, \citenamefont {Peres},\ and\ \citenamefont
  {Wootters}}]{bennett1993teleporting}%
  \BibitemOpen
  \bibfield  {author} {\bibinfo {author} {\bibfnamefont {C.~H.}\ \bibnamefont
  {Bennett}}, \bibinfo {author} {\bibfnamefont {G.}~\bibnamefont {Brassard}},
  \bibinfo {author} {\bibfnamefont {C.}~\bibnamefont {Cr{\'e}peau}}, \bibinfo
  {author} {\bibfnamefont {R.}~\bibnamefont {Jozsa}}, \bibinfo {author}
  {\bibfnamefont {A.}~\bibnamefont {Peres}}, \ and\ \bibinfo {author}
  {\bibfnamefont {W.~K.}\ \bibnamefont {Wootters}},\ }\href@noop {} {\bibfield
  {journal} {\bibinfo  {journal} {Physical review letters}\ }\textbf {\bibinfo
  {volume} {70}},\ \bibinfo {pages} {1895} (\bibinfo {year}
  {1993})}\BibitemShut {NoStop}%
\bibitem [{\citenamefont {Horodecki}\ \emph {et~al.}(2009)\citenamefont
  {Horodecki}, \citenamefont {Horodecki}, \citenamefont {Horodecki},\ and\
  \citenamefont {Horodecki}}]{horodecki2009quantum}%
  \BibitemOpen
  \bibfield  {author} {\bibinfo {author} {\bibfnamefont {R.}~\bibnamefont
  {Horodecki}}, \bibinfo {author} {\bibfnamefont {P.}~\bibnamefont
  {Horodecki}}, \bibinfo {author} {\bibfnamefont {M.}~\bibnamefont
  {Horodecki}}, \ and\ \bibinfo {author} {\bibfnamefont {K.}~\bibnamefont
  {Horodecki}},\ }\href@noop {} {\bibfield  {journal} {\bibinfo  {journal}
  {Reviews of modern physics}\ }\textbf {\bibinfo {volume} {81}},\ \bibinfo
  {pages} {865} (\bibinfo {year} {2009})}\BibitemShut {NoStop}%
\bibitem [{\citenamefont {Schr{\"o}dinger}(1935)}]{schrodinger1935discussion}%
  \BibitemOpen
  \bibfield  {author} {\bibinfo {author} {\bibfnamefont {E.}~\bibnamefont
  {Schr{\"o}dinger}},\ }in\ \href@noop {} {\emph {\bibinfo {booktitle}
  {Mathematical Proceedings of the Cambridge Philosophical Society}}},\
  Vol.~\bibinfo {volume} {31}\ (\bibinfo {organization} {Cambridge University
  Press},\ \bibinfo {year} {1935})\ pp.\ \bibinfo {pages}
  {555--563}\BibitemShut {NoStop}%
\bibitem [{\citenamefont {Dirac}(1981)}]{dirac1981principles}%
  \BibitemOpen
  \bibfield  {author} {\bibinfo {author} {\bibfnamefont {P.~A.~M.}\
  \bibnamefont {Dirac}},\ }\href@noop {} {\emph {\bibinfo {title} {The
  principles of quantum mechanics}}},\ \bibinfo {number} {27}\ (\bibinfo
  {publisher} {Oxford university press},\ \bibinfo {year} {1981})\BibitemShut
  {NoStop}%
\bibitem [{\citenamefont {Ichikawa}\ \emph {et~al.}(2008)\citenamefont
  {Ichikawa}, \citenamefont {Sasaki}, \citenamefont {Tsutsui},\ and\
  \citenamefont {Yonezawa}}]{ichikawa2008exchange}%
  \BibitemOpen
  \bibfield  {author} {\bibinfo {author} {\bibfnamefont {T.}~\bibnamefont
  {Ichikawa}}, \bibinfo {author} {\bibfnamefont {T.}~\bibnamefont {Sasaki}},
  \bibinfo {author} {\bibfnamefont {I.}~\bibnamefont {Tsutsui}}, \ and\
  \bibinfo {author} {\bibfnamefont {N.}~\bibnamefont {Yonezawa}},\ }\href@noop
  {} {\bibfield  {journal} {\bibinfo  {journal} {Physical Review A}\ }\textbf
  {\bibinfo {volume} {78}},\ \bibinfo {pages} {052105} (\bibinfo {year}
  {2008})}\BibitemShut {NoStop}%
\bibitem [{\citenamefont {Wei}(2010)}]{wei2010exchange}%
  \BibitemOpen
  \bibfield  {author} {\bibinfo {author} {\bibfnamefont {T.-C.}\ \bibnamefont
  {Wei}},\ }\href@noop {} {\bibfield  {journal} {\bibinfo  {journal} {Physical
  Review A}\ }\textbf {\bibinfo {volume} {81}},\ \bibinfo {pages} {054102}
  (\bibinfo {year} {2010})}\BibitemShut {NoStop}%
\bibitem [{\citenamefont {Ghirardi}(1977)}]{ghirardi1977gc}%
  \BibitemOpen
  \bibfield  {author} {\bibinfo {author} {\bibfnamefont {G.}~\bibnamefont
  {Ghirardi}},\ }\href@noop {} {\bibfield  {journal} {\bibinfo  {journal}
  {Nuovo Cimento B}\ }\textbf {\bibinfo {volume} {39}},\ \bibinfo {pages} {130}
  (\bibinfo {year} {1977})}\BibitemShut {NoStop}%
\bibitem [{\citenamefont {Ghirardi}\ and\ \citenamefont
  {Marinatto}(2004)}]{ghirardi2004general}%
  \BibitemOpen
  \bibfield  {author} {\bibinfo {author} {\bibfnamefont {G.}~\bibnamefont
  {Ghirardi}}\ and\ \bibinfo {author} {\bibfnamefont {L.}~\bibnamefont
  {Marinatto}},\ }\href@noop {} {\bibfield  {journal} {\bibinfo  {journal}
  {Physical Review A}\ }\textbf {\bibinfo {volume} {70}},\ \bibinfo {pages}
  {012109} (\bibinfo {year} {2004})}\BibitemShut {NoStop}%
\bibitem [{\citenamefont {Schliemann}(2001{\natexlab{a}})}]{schliemann2001j}%
  \BibitemOpen
  \bibfield  {author} {\bibinfo {author} {\bibfnamefont {J.}~\bibnamefont
  {Schliemann}},\ }\href@noop {} {\bibfield  {journal} {\bibinfo  {journal}
  {Phys. Rev. B}\ }\textbf {\bibinfo {volume} {63}},\ \bibinfo {pages} {085311}
  (\bibinfo {year} {2001}{\natexlab{a}})}\BibitemShut {NoStop}%
\bibitem [{\citenamefont {Schliemann}(2001{\natexlab{b}})}]{schliemann2001ja}%
  \BibitemOpen
  \bibfield  {author} {\bibinfo {author} {\bibfnamefont {J.}~\bibnamefont
  {Schliemann}},\ }\href@noop {} {\bibfield  {journal} {\bibinfo  {journal}
  {Phys. Rev. A}\ }\textbf {\bibinfo {volume} {64}},\ \bibinfo {pages} {022303}
  (\bibinfo {year} {2001}{\natexlab{b}})}\BibitemShut {NoStop}%
\bibitem [{\citenamefont {Eckert}(2002)}]{eckert2002k}%
  \BibitemOpen
  \bibfield  {author} {\bibinfo {author} {\bibfnamefont {K.}~\bibnamefont
  {Eckert}},\ }\href@noop {} {\bibfield  {journal} {\bibinfo  {journal} {Ann.
  Phys.(NY)}\ }\textbf {\bibinfo {volume} {299}},\ \bibinfo {pages} {88}
  (\bibinfo {year} {2002})}\BibitemShut {NoStop}%
\bibitem [{\citenamefont {Est{\`e}ve}(2008)}]{esteve2008j}%
  \BibitemOpen
  \bibfield  {author} {\bibinfo {author} {\bibfnamefont {J.}~\bibnamefont
  {Est{\`e}ve}},\ }\href@noop {} {\bibfield  {journal} {\bibinfo  {journal}
  {Nature (London)}\ }\textbf {\bibinfo {volume} {455}},\ \bibinfo {pages}
  {1216} (\bibinfo {year} {2008})}\BibitemShut {NoStop}%
\bibitem [{\citenamefont {Ghirardi}\ \emph {et~al.}(2002)\citenamefont
  {Ghirardi}, \citenamefont {Marinatto},\ and\ \citenamefont
  {Weber}}]{ghirardi2002entanglement}%
  \BibitemOpen
  \bibfield  {author} {\bibinfo {author} {\bibfnamefont {G.}~\bibnamefont
  {Ghirardi}}, \bibinfo {author} {\bibfnamefont {L.}~\bibnamefont {Marinatto}},
  \ and\ \bibinfo {author} {\bibfnamefont {T.}~\bibnamefont {Weber}},\
  }\href@noop {} {\bibfield  {journal} {\bibinfo  {journal} {Journal of
  Statistical Physics}\ }\textbf {\bibinfo {volume} {108}},\ \bibinfo {pages}
  {49} (\bibinfo {year} {2002})}\BibitemShut {NoStop}%
\bibitem [{\citenamefont {Paskauskas}\ and\ \citenamefont
  {You}(2001)}]{paskauskas2001r}%
  \BibitemOpen
  \bibfield  {author} {\bibinfo {author} {\bibfnamefont {R.}~\bibnamefont
  {Paskauskas}}\ and\ \bibinfo {author} {\bibfnamefont {L.}~\bibnamefont
  {You}},\ }\href@noop {} {\bibfield  {journal} {\bibinfo  {journal} {Phys.
  Rev. A}\ }\textbf {\bibinfo {volume} {64}},\ \bibinfo {pages} {042310}
  (\bibinfo {year} {2001})}\BibitemShut {NoStop}%
\bibitem [{\citenamefont {Zanardi}(2002)}]{zanardi2002p}%
  \BibitemOpen
  \bibfield  {author} {\bibinfo {author} {\bibfnamefont {P.}~\bibnamefont
  {Zanardi}},\ }\href@noop {} {\bibfield  {journal} {\bibinfo  {journal} {Phys.
  Rev. A}\ }\textbf {\bibinfo {volume} {65}},\ \bibinfo {pages} {042101}
  (\bibinfo {year} {2002})}\BibitemShut {NoStop}%
\bibitem [{\citenamefont {Shi}(2003)}]{shi2003shi}%
  \BibitemOpen
  \bibfield  {author} {\bibinfo {author} {\bibfnamefont {Y.}~\bibnamefont
  {Shi}},\ }\href@noop {} {\bibfield  {journal} {\bibinfo  {journal} {Phys.
  Rev. A}\ }\textbf {\bibinfo {volume} {67}},\ \bibinfo {pages} {024301}
  (\bibinfo {year} {2003})}\BibitemShut {NoStop}%
\bibitem [{\citenamefont {Barnum}\ \emph {et~al.}(2004)\citenamefont {Barnum},
  \citenamefont {Knill}, \citenamefont {Ortiz}, \citenamefont {Somma},\ and\
  \citenamefont {Viola}}]{barnum2004subsystem}%
  \BibitemOpen
  \bibfield  {author} {\bibinfo {author} {\bibfnamefont {H.}~\bibnamefont
  {Barnum}}, \bibinfo {author} {\bibfnamefont {E.}~\bibnamefont {Knill}},
  \bibinfo {author} {\bibfnamefont {G.}~\bibnamefont {Ortiz}}, \bibinfo
  {author} {\bibfnamefont {R.}~\bibnamefont {Somma}}, \ and\ \bibinfo {author}
  {\bibfnamefont {L.}~\bibnamefont {Viola}},\ }\href@noop {} {\bibfield
  {journal} {\bibinfo  {journal} {Physical Review Letters}\ }\textbf {\bibinfo
  {volume} {92}},\ \bibinfo {pages} {107902} (\bibinfo {year}
  {2004})}\BibitemShut {NoStop}%
\bibitem [{\citenamefont {Barnum}\ \emph {et~al.}(2005)\citenamefont {Barnum},
  \citenamefont {Ortiz}, \citenamefont {Somma},\ and\ \citenamefont
  {Viola}}]{barnum2005generalization}%
  \BibitemOpen
  \bibfield  {author} {\bibinfo {author} {\bibfnamefont {H.}~\bibnamefont
  {Barnum}}, \bibinfo {author} {\bibfnamefont {G.}~\bibnamefont {Ortiz}},
  \bibinfo {author} {\bibfnamefont {R.}~\bibnamefont {Somma}}, \ and\ \bibinfo
  {author} {\bibfnamefont {L.}~\bibnamefont {Viola}},\ }\href@noop {}
  {\bibfield  {journal} {\bibinfo  {journal} {International Journal of
  Theoretical Physics}\ }\textbf {\bibinfo {volume} {44}},\ \bibinfo {pages}
  {2127} (\bibinfo {year} {2005})}\BibitemShut {NoStop}%
\bibitem [{\citenamefont {Tichy}\ \emph {et~al.}(2013)\citenamefont {Tichy},
  \citenamefont {de~Melo}, \citenamefont {Ku{\'s}}, \citenamefont {Mintert},\
  and\ \citenamefont {Buchleitner}}]{tichy2013entanglement}%
  \BibitemOpen
  \bibfield  {author} {\bibinfo {author} {\bibfnamefont {M.~C.}\ \bibnamefont
  {Tichy}}, \bibinfo {author} {\bibfnamefont {F.}~\bibnamefont {de~Melo}},
  \bibinfo {author} {\bibfnamefont {M.}~\bibnamefont {Ku{\'s}}}, \bibinfo
  {author} {\bibfnamefont {F.}~\bibnamefont {Mintert}}, \ and\ \bibinfo
  {author} {\bibfnamefont {A.}~\bibnamefont {Buchleitner}},\ }\href@noop {}
  {\bibfield  {journal} {\bibinfo  {journal} {Fortschritte der Physik}\
  }\textbf {\bibinfo {volume} {61}},\ \bibinfo {pages} {225} (\bibinfo {year}
  {2013})}\BibitemShut {NoStop}%
\bibitem [{\citenamefont {Benatti}\ \emph {et~al.}(2011)\citenamefont
  {Benatti}, \citenamefont {Floreanini},\ and\ \citenamefont
  {Marzolino}}]{benatti2011entanglement}%
  \BibitemOpen
  \bibfield  {author} {\bibinfo {author} {\bibfnamefont {F.}~\bibnamefont
  {Benatti}}, \bibinfo {author} {\bibfnamefont {R.}~\bibnamefont {Floreanini}},
  \ and\ \bibinfo {author} {\bibfnamefont {U.}~\bibnamefont {Marzolino}},\
  }\href@noop {} {\bibfield  {journal} {\bibinfo  {journal} {Journal of Physics
  B: Atomic, Molecular and Optical Physics}\ }\textbf {\bibinfo {volume}
  {44}},\ \bibinfo {pages} {091001} (\bibinfo {year} {2011})}\BibitemShut
  {NoStop}%
\bibitem [{\citenamefont {Benatti}\ \emph
  {et~al.}(2012{\natexlab{a}})\citenamefont {Benatti}, \citenamefont
  {Floreanini},\ and\ \citenamefont {Marzolino}}]{benatti2012bipartite}%
  \BibitemOpen
  \bibfield  {author} {\bibinfo {author} {\bibfnamefont {F.}~\bibnamefont
  {Benatti}}, \bibinfo {author} {\bibfnamefont {R.}~\bibnamefont {Floreanini}},
  \ and\ \bibinfo {author} {\bibfnamefont {U.}~\bibnamefont {Marzolino}},\
  }\href@noop {} {\bibfield  {journal} {\bibinfo  {journal} {Annals of
  Physics}\ }\textbf {\bibinfo {volume} {327}},\ \bibinfo {pages} {1304}
  (\bibinfo {year} {2012}{\natexlab{a}})}\BibitemShut {NoStop}%
\bibitem [{\citenamefont {Benatti}\ \emph
  {et~al.}(2012{\natexlab{b}})\citenamefont {Benatti}, \citenamefont
  {Floreanini},\ and\ \citenamefont {Marzolino}}]{benatti2012entanglement}%
  \BibitemOpen
  \bibfield  {author} {\bibinfo {author} {\bibfnamefont {F.}~\bibnamefont
  {Benatti}}, \bibinfo {author} {\bibfnamefont {R.}~\bibnamefont {Floreanini}},
  \ and\ \bibinfo {author} {\bibfnamefont {U.}~\bibnamefont {Marzolino}},\
  }\href@noop {} {\bibfield  {journal} {\bibinfo  {journal} {Physical Review
  A}\ }\textbf {\bibinfo {volume} {85}},\ \bibinfo {pages} {042329} (\bibinfo
  {year} {2012}{\natexlab{b}})}\BibitemShut {NoStop}%
\bibitem [{\citenamefont {Marzolino}\ and\ \citenamefont
  {Buchleitner}(2015)}]{marzolino2015quantum}%
  \BibitemOpen
  \bibfield  {author} {\bibinfo {author} {\bibfnamefont {U.}~\bibnamefont
  {Marzolino}}\ and\ \bibinfo {author} {\bibfnamefont {A.}~\bibnamefont
  {Buchleitner}},\ }\href@noop {} {\bibfield  {journal} {\bibinfo  {journal}
  {Physical Review A}\ }\textbf {\bibinfo {volume} {91}},\ \bibinfo {pages}
  {032316} (\bibinfo {year} {2015})}\BibitemShut {NoStop}%
\bibitem [{\citenamefont {Marzolino}\ and\ \citenamefont
  {Buchleitner}(2016)}]{marzolino2016performances}%
  \BibitemOpen
  \bibfield  {author} {\bibinfo {author} {\bibfnamefont {U.}~\bibnamefont
  {Marzolino}}\ and\ \bibinfo {author} {\bibfnamefont {A.}~\bibnamefont
  {Buchleitner}},\ }\href@noop {} {\bibfield  {journal} {\bibinfo  {journal}
  {Proc. R. Soc. A}\ }\textbf {\bibinfo {volume} {472}},\ \bibinfo {pages}
  {20150621} (\bibinfo {year} {2016})}\BibitemShut {NoStop}%
\bibitem [{\citenamefont {Benatti}\ \emph {et~al.}(2017)\citenamefont
  {Benatti}, \citenamefont {Floreanini}, \citenamefont {Franchini},\ and\
  \citenamefont {Marzolino}}]{benatti2017remarks}%
  \BibitemOpen
  \bibfield  {author} {\bibinfo {author} {\bibfnamefont {F.}~\bibnamefont
  {Benatti}}, \bibinfo {author} {\bibfnamefont {R.}~\bibnamefont {Floreanini}},
  \bibinfo {author} {\bibfnamefont {F.}~\bibnamefont {Franchini}}, \ and\
  \bibinfo {author} {\bibfnamefont {U.}~\bibnamefont {Marzolino}},\ }\href@noop
  {} {\bibfield  {journal} {\bibinfo  {journal} {Open Systems \& Information
  Dynamics}\ }\textbf {\bibinfo {volume} {24}},\ \bibinfo {pages} {1740004}
  (\bibinfo {year} {2017})}\BibitemShut {NoStop}%
\bibitem [{\citenamefont {Franco}\ and\ \citenamefont
  {Compagno}(2016)}]{franco2016quantum}%
  \BibitemOpen
  \bibfield  {author} {\bibinfo {author} {\bibfnamefont {R.~L.}\ \bibnamefont
  {Franco}}\ and\ \bibinfo {author} {\bibfnamefont {G.}~\bibnamefont
  {Compagno}},\ }\href@noop {} {\bibfield  {journal} {\bibinfo  {journal}
  {Scientific reports}\ }\textbf {\bibinfo {volume} {6}},\ \bibinfo {pages}
  {20603} (\bibinfo {year} {2016})}\BibitemShut {NoStop}%
\bibitem [{\citenamefont {Sciara}\ \emph {et~al.}(2017)\citenamefont {Sciara},
  \citenamefont {Franco},\ and\ \citenamefont
  {Compagno}}]{sciara2017universality}%
  \BibitemOpen
  \bibfield  {author} {\bibinfo {author} {\bibfnamefont {S.}~\bibnamefont
  {Sciara}}, \bibinfo {author} {\bibfnamefont {R.~L.}\ \bibnamefont {Franco}},
  \ and\ \bibinfo {author} {\bibfnamefont {G.}~\bibnamefont {Compagno}},\
  }\href@noop {} {\bibfield  {journal} {\bibinfo  {journal} {Scientific
  Reports}\ }\textbf {\bibinfo {volume} {7}},\ \bibinfo {pages} {44675}
  (\bibinfo {year} {2017})}\BibitemShut {NoStop}%
\bibitem [{\citenamefont {Bellomo}\ \emph {et~al.}(2017)\citenamefont
  {Bellomo}, \citenamefont {Lo~Franco},\ and\ \citenamefont
  {Compagno}}]{bellomo2017}%
  \BibitemOpen
  \bibfield  {author} {\bibinfo {author} {\bibfnamefont {B.}~\bibnamefont
  {Bellomo}}, \bibinfo {author} {\bibfnamefont {R.}~\bibnamefont {Lo~Franco}},
  \ and\ \bibinfo {author} {\bibfnamefont {G.}~\bibnamefont {Compagno}},\
  }\href@noop {} {\bibfield  {journal} {\bibinfo  {journal} {Phys. Rev. A}\
  }\textbf {\bibinfo {volume} {96}},\ \bibinfo {pages} {022319} (\bibinfo
  {year} {2017})}\BibitemShut {NoStop}%
\bibitem [{\citenamefont {Compagno}\ \emph {et~al.}(2018)\citenamefont
  {Compagno}, \citenamefont {Castellini},\ and\ \citenamefont
  {Franco}}]{compagno2018dealing}%
  \BibitemOpen
  \bibfield  {author} {\bibinfo {author} {\bibfnamefont {G.}~\bibnamefont
  {Compagno}}, \bibinfo {author} {\bibfnamefont {A.}~\bibnamefont
  {Castellini}}, \ and\ \bibinfo {author} {\bibfnamefont {R.~L.}\ \bibnamefont
  {Franco}},\ }\href@noop {} {\bibfield  {journal} {\bibinfo  {journal} {Phil.
  Trans. R. Soc. A}\ }\textbf {\bibinfo {volume} {376}},\ \bibinfo {pages}
  {20170317} (\bibinfo {year} {2018})}\BibitemShut {NoStop}%
\bibitem [{\citenamefont {Lo~Franco}\ and\ \citenamefont
  {Compagno}(2018)}]{giuseppe2018}%
  \BibitemOpen
  \bibfield  {author} {\bibinfo {author} {\bibfnamefont {R.}~\bibnamefont
  {Lo~Franco}}\ and\ \bibinfo {author} {\bibfnamefont {G.}~\bibnamefont
  {Compagno}},\ }\href@noop {} {\bibfield  {journal} {\bibinfo  {journal}
  {Phys. Rev. Lett.}\ }\textbf {\bibinfo {volume} {120}},\ \bibinfo {pages}
  {240403} (\bibinfo {year} {2018})}\BibitemShut {NoStop}%
\bibitem [{\citenamefont {Castellini}\ \emph {et~al.}(2018)\citenamefont
  {Castellini}, \citenamefont {Bellomo}, \citenamefont {Compagno},\ and\
  \citenamefont {Franco}}]{castellini2018entanglement}%
  \BibitemOpen
  \bibfield  {author} {\bibinfo {author} {\bibfnamefont {A.}~\bibnamefont
  {Castellini}}, \bibinfo {author} {\bibfnamefont {B.}~\bibnamefont {Bellomo}},
  \bibinfo {author} {\bibfnamefont {G.}~\bibnamefont {Compagno}}, \ and\
  \bibinfo {author} {\bibfnamefont {R.~L.}\ \bibnamefont {Franco}},\
  }\href@noop {} {\bibfield  {journal} {\bibinfo  {journal} {arXiv preprint
  arXiv:1812.02141}\ } (\bibinfo {year} {2018})}\BibitemShut {NoStop}%
\bibitem [{\citenamefont {Louren{\c{c}}o}\ \emph {et~al.}(2019)\citenamefont
  {Louren{\c{c}}o}, \citenamefont {Debarba},\ and\ \citenamefont
  {Duzzioni}}]{lourencco2019entanglement}%
  \BibitemOpen
  \bibfield  {author} {\bibinfo {author} {\bibfnamefont {A.~C.}\ \bibnamefont
  {Louren{\c{c}}o}}, \bibinfo {author} {\bibfnamefont {T.}~\bibnamefont
  {Debarba}}, \ and\ \bibinfo {author} {\bibfnamefont {E.~I.}\ \bibnamefont
  {Duzzioni}},\ }\href@noop {} {\bibfield  {journal} {\bibinfo  {journal}
  {Physical Review A}\ }\textbf {\bibinfo {volume} {99}},\ \bibinfo {pages}
  {012341} (\bibinfo {year} {2019})}\BibitemShut {NoStop}%
\bibitem [{\citenamefont {Killoran}\ \emph {et~al.}(2014)\citenamefont
  {Killoran}, \citenamefont {Cramer},\ and\ \citenamefont
  {Plenio}}]{killoran2014extracting}%
  \BibitemOpen
  \bibfield  {author} {\bibinfo {author} {\bibfnamefont {N.}~\bibnamefont
  {Killoran}}, \bibinfo {author} {\bibfnamefont {M.}~\bibnamefont {Cramer}}, \
  and\ \bibinfo {author} {\bibfnamefont {M.~B.}\ \bibnamefont {Plenio}},\
  }\href@noop {} {\bibfield  {journal} {\bibinfo  {journal} {Physical review
  letters}\ }\textbf {\bibinfo {volume} {112}},\ \bibinfo {pages} {150501}
  (\bibinfo {year} {2014})}\BibitemShut {NoStop}%
\bibitem [{\citenamefont {Cavalcanti}\ \emph {et~al.}(2007)\citenamefont
  {Cavalcanti}, \citenamefont {Malard}, \citenamefont {Matinaga}, \citenamefont
  {Cunha},\ and\ \citenamefont {Santos}}]{cavalcanti2007useful}%
  \BibitemOpen
  \bibfield  {author} {\bibinfo {author} {\bibfnamefont {D.}~\bibnamefont
  {Cavalcanti}}, \bibinfo {author} {\bibfnamefont {L.}~\bibnamefont {Malard}},
  \bibinfo {author} {\bibfnamefont {F.}~\bibnamefont {Matinaga}}, \bibinfo
  {author} {\bibfnamefont {M.~T.}\ \bibnamefont {Cunha}}, \ and\ \bibinfo
  {author} {\bibfnamefont {M.~F.}\ \bibnamefont {Santos}},\ }\href@noop {}
  {\bibfield  {journal} {\bibinfo  {journal} {Physical Review B}\ }\textbf
  {\bibinfo {volume} {76}},\ \bibinfo {pages} {113304} (\bibinfo {year}
  {2007})}\BibitemShut {NoStop}%
\bibitem [{\citenamefont {Dalton}\ \emph
  {et~al.}(2017{\natexlab{a}})\citenamefont {Dalton}, \citenamefont {Goold},
  \citenamefont {Garraway},\ and\ \citenamefont {Reid}}]{dalton2017quantum}%
  \BibitemOpen
  \bibfield  {author} {\bibinfo {author} {\bibfnamefont {B.}~\bibnamefont
  {Dalton}}, \bibinfo {author} {\bibfnamefont {J.}~\bibnamefont {Goold}},
  \bibinfo {author} {\bibfnamefont {B.}~\bibnamefont {Garraway}}, \ and\
  \bibinfo {author} {\bibfnamefont {M.}~\bibnamefont {Reid}},\ }\href@noop {}
  {\bibfield  {journal} {\bibinfo  {journal} {Physica Scripta}\ }\textbf
  {\bibinfo {volume} {92}},\ \bibinfo {pages} {023004} (\bibinfo {year}
  {2017}{\natexlab{a}})}\BibitemShut {NoStop}%
\bibitem [{\citenamefont {Balachandran}\ \emph
  {et~al.}(2013{\natexlab{a}})\citenamefont {Balachandran}, \citenamefont
  {Govindarajan}, \citenamefont {de~Queiroz},\ and\ \citenamefont
  {Reyes-Lega}}]{balachandran2013entanglement}%
  \BibitemOpen
  \bibfield  {author} {\bibinfo {author} {\bibfnamefont {A.}~\bibnamefont
  {Balachandran}}, \bibinfo {author} {\bibfnamefont {T.}~\bibnamefont
  {Govindarajan}}, \bibinfo {author} {\bibfnamefont {A.~R.}\ \bibnamefont
  {de~Queiroz}}, \ and\ \bibinfo {author} {\bibfnamefont {A.}~\bibnamefont
  {Reyes-Lega}},\ }\href@noop {} {\bibfield  {journal} {\bibinfo  {journal}
  {Physical review letters}\ }\textbf {\bibinfo {volume} {110}},\ \bibinfo
  {pages} {080503} (\bibinfo {year} {2013}{\natexlab{a}})}\BibitemShut
  {NoStop}%
\bibitem [{\citenamefont {Balachandran}\ \emph
  {et~al.}(2013{\natexlab{b}})\citenamefont {Balachandran}, \citenamefont
  {Govindarajan}, \citenamefont {de~Queiroz},\ and\ \citenamefont
  {Reyes-Lega}}]{balachandran2013algebraic}%
  \BibitemOpen
  \bibfield  {author} {\bibinfo {author} {\bibfnamefont {A.}~\bibnamefont
  {Balachandran}}, \bibinfo {author} {\bibfnamefont {T.}~\bibnamefont
  {Govindarajan}}, \bibinfo {author} {\bibfnamefont {A.~R.}\ \bibnamefont
  {de~Queiroz}}, \ and\ \bibinfo {author} {\bibfnamefont {A.}~\bibnamefont
  {Reyes-Lega}},\ }\href@noop {} {\bibfield  {journal} {\bibinfo  {journal}
  {Physical Review A}\ }\textbf {\bibinfo {volume} {88}},\ \bibinfo {pages}
  {022301} (\bibinfo {year} {2013}{\natexlab{b}})}\BibitemShut {NoStop}%
\bibitem [{\citenamefont {Chin}\ and\ \citenamefont {Huh}()}]{Chin}%
  \BibitemOpen
  \bibfield  {author} {\bibinfo {author} {\bibfnamefont {S.}~\bibnamefont
  {Chin}}\ and\ \bibinfo {author} {\bibfnamefont {J.}~\bibnamefont {Huh}},\
  }\href@noop {} {\bibinfo  {journal} {in preparation}\ }\BibitemShut {NoStop}%
\bibitem [{Note1()}]{Note1}%
  \BibitemOpen
\bibfield  {journal} {  }\bibinfo {note} {This technic contrasts with NSA \cite
  {compagno2018dealing}, which performs all calculations with implicitly
  defined projection relations among states}\BibitemShut {NoStop}%
\bibitem [{\citenamefont {Tichy}\ and\ \citenamefont
  {M{\o}lmer}(2017)}]{tichy2017extending}%
  \BibitemOpen
  \bibfield  {author} {\bibinfo {author} {\bibfnamefont {M.~C.}\ \bibnamefont
  {Tichy}}\ and\ \bibinfo {author} {\bibfnamefont {K.}~\bibnamefont
  {M{\o}lmer}},\ }\href@noop {} {\bibfield  {journal} {\bibinfo  {journal}
  {Physical Review A}\ }\textbf {\bibinfo {volume} {96}},\ \bibinfo {pages}
  {022119} (\bibinfo {year} {2017})}\BibitemShut {NoStop}%
\bibitem [{\citenamefont {Paunkovic}(2004)}]{paunkovic2004role}%
  \BibitemOpen
  \bibfield  {author} {\bibinfo {author} {\bibfnamefont {N.}~\bibnamefont
  {Paunkovic}},\ }\emph {\bibinfo {title} {The role of indistinguishability of
  identical particles in quantum information processing}},\ \href@noop {}
  {Ph.D. thesis},\ \bibinfo  {school} {University of Oxford} (\bibinfo {year}
  {2004})\BibitemShut {NoStop}%
\bibitem [{\citenamefont {Peres}(2006)}]{peres2006quantum}%
  \BibitemOpen
  \bibfield  {author} {\bibinfo {author} {\bibfnamefont {A.}~\bibnamefont
  {Peres}},\ }\href@noop {} {\emph {\bibinfo {title} {Quantum theory: concepts
  and methods}}},\ Vol.~\bibinfo {volume} {57}\ (\bibinfo  {publisher}
  {Springer Science \& Business Media},\ \bibinfo {year} {2006})\BibitemShut
  {NoStop}%
\bibitem [{Note2()}]{Note2}%
  \BibitemOpen
  \bibinfo {note} {Both the concept of detection-level entanglement \cite
  {tichy2013entanglement} and the algebraic approach to the entanglement of
  indistinguishability \cite
  {benatti2011entanglement,benatti2012bipartite,benatti2017remarks} presume
  that the entanglement generated from indistinguishability depends on the
  experimental context, which implies not just the spatial overlap among
  identical particles but also between particles and dete ctors.}\BibitemShut
  {Stop}%
\bibitem [{\citenamefont {Baumgratz}\ \emph {et~al.}(2014)\citenamefont
  {Baumgratz}, \citenamefont {Cramer},\ and\ \citenamefont
  {Plenio}}]{baumgratz2014quantifying}%
  \BibitemOpen
  \bibfield  {author} {\bibinfo {author} {\bibfnamefont {T.}~\bibnamefont
  {Baumgratz}}, \bibinfo {author} {\bibfnamefont {M.}~\bibnamefont {Cramer}}, \
  and\ \bibinfo {author} {\bibfnamefont {M.}~\bibnamefont {Plenio}},\
  }\href@noop {} {\bibfield  {journal} {\bibinfo  {journal} {Physical review
  letters}\ }\textbf {\bibinfo {volume} {113}},\ \bibinfo {pages} {140401}
  (\bibinfo {year} {2014})}\BibitemShut {NoStop}%
\bibitem [{\citenamefont {Streltsov}\ \emph {et~al.}(2017)\citenamefont
  {Streltsov}, \citenamefont {Adesso},\ and\ \citenamefont
  {Plenio}}]{streltsov2017colloquium}%
  \BibitemOpen
  \bibfield  {author} {\bibinfo {author} {\bibfnamefont {A.}~\bibnamefont
  {Streltsov}}, \bibinfo {author} {\bibfnamefont {G.}~\bibnamefont {Adesso}}, \
  and\ \bibinfo {author} {\bibfnamefont {M.~B.}\ \bibnamefont {Plenio}},\
  }\href@noop {} {\bibfield  {journal} {\bibinfo  {journal} {Reviews of Modern
  Physics}\ }\textbf {\bibinfo {volume} {89}},\ \bibinfo {pages} {041003}
  (\bibinfo {year} {2017})}\BibitemShut {NoStop}%
\bibitem [{\citenamefont {Wiseman}\ and\ \citenamefont
  {Vaccaro}(2003)}]{wiseman2003entanglement}%
  \BibitemOpen
  \bibfield  {author} {\bibinfo {author} {\bibfnamefont {H.}~\bibnamefont
  {Wiseman}}\ and\ \bibinfo {author} {\bibfnamefont {J.~A.}\ \bibnamefont
  {Vaccaro}},\ }\href@noop {} {\bibfield  {journal} {\bibinfo  {journal}
  {Physical review letters}\ }\textbf {\bibinfo {volume} {91}},\ \bibinfo
  {pages} {097902} (\bibinfo {year} {2003})}\BibitemShut {NoStop}%
\bibitem [{\citenamefont {Amico}\ \emph {et~al.}(2008)\citenamefont {Amico},
  \citenamefont {Fazio}, \citenamefont {Osterloh},\ and\ \citenamefont
  {Vedral}}]{amico2008entanglement}%
  \BibitemOpen
  \bibfield  {author} {\bibinfo {author} {\bibfnamefont {L.}~\bibnamefont
  {Amico}}, \bibinfo {author} {\bibfnamefont {R.}~\bibnamefont {Fazio}},
  \bibinfo {author} {\bibfnamefont {A.}~\bibnamefont {Osterloh}}, \ and\
  \bibinfo {author} {\bibfnamefont {V.}~\bibnamefont {Vedral}},\ }\href@noop {}
  {\bibfield  {journal} {\bibinfo  {journal} {Reviews of modern physics}\
  }\textbf {\bibinfo {volume} {80}},\ \bibinfo {pages} {517} (\bibinfo {year}
  {2008})}\BibitemShut {NoStop}%
\bibitem [{\citenamefont {Chin}(2017)}]{chin2017coherence}%
  \BibitemOpen
  \bibfield  {author} {\bibinfo {author} {\bibfnamefont {S.}~\bibnamefont
  {Chin}},\ }\href@noop {} {\bibfield  {journal} {\bibinfo  {journal} {Physical
  Review A}\ }\textbf {\bibinfo {volume} {96}},\ \bibinfo {pages} {042336}
  (\bibinfo {year} {2017})}\BibitemShut {NoStop}%
\bibitem [{\citenamefont {Dalton}\ \emph
  {et~al.}(2017{\natexlab{b}})\citenamefont {Dalton}, \citenamefont {Goold},
  \citenamefont {Garraway},\ and\ \citenamefont {Reid}}]{dalton2017quantum2}%
  \BibitemOpen
  \bibfield  {author} {\bibinfo {author} {\bibfnamefont {B.}~\bibnamefont
  {Dalton}}, \bibinfo {author} {\bibfnamefont {J.}~\bibnamefont {Goold}},
  \bibinfo {author} {\bibfnamefont {B.}~\bibnamefont {Garraway}}, \ and\
  \bibinfo {author} {\bibfnamefont {M.}~\bibnamefont {Reid}},\ }\href@noop {}
  {\bibfield  {journal} {\bibinfo  {journal} {Physica Scripta}\ }\textbf
  {\bibinfo {volume} {92}},\ \bibinfo {pages} {023005} (\bibinfo {year}
  {2017}{\natexlab{b}})}\BibitemShut {NoStop}%
\bibitem [{\citenamefont {Cohen}(2016)}]{cohen2016aristotle}%
  \BibitemOpen
  \bibfield  {author} {\bibinfo {author} {\bibfnamefont {S.~M.}\ \bibnamefont
  {Cohen}},\ }\href@noop {} {\bibfield  {journal} {\bibinfo  {journal} {The
  Stanford Encyclopedia of Philosophy (Winter 2016 Edition), Edward N. Zalta
  (ed.)}\ } (\bibinfo {year} {2016})}\BibitemShut {NoStop}%
\end{thebibliography}%

\appendix

\section{Comparison of 1QL with NSA}\label{NSA}

It is straightforward to see that 1QL formalism presents quantitatively equivalent results to those of NSA, by considering Eq.~\eqref{ta} and \eqref{partsym}. First, the transition amplitude Eq.\eqref{ta} is rewritten as (up to normalization)
\begin{align}\label{A1}
&\<\Phi_1,\cdots,\Phi_N|\P_1,\cdots,\P_N\> \nn \\
& =\sum_{\r,\s} A_{\r(1)\s(1)}\cdots A_{\r(N)\s(N)} \nn \\
&= \sum_{\r,\s}A_{1,\r^{-1}\s(1)}\cdots A_{N,\r^{-1}\s(N)} \nn \\
&= =\sum_{\s'}A_{1\s'(1)}\cdots A_{N,\s'(N)}  
\end{align} where $A_{ij} \<\Phi_i|\P_j\>$. And the contraction of $\<\Phi|$ on $|\P_1,\cdots,\P_N\>$ gives
\begin{align} \label{A2}
&\<\Phi|\P_1,\cdots,\P_N\>\nn \\
&= \sum_{\s}\sum_{i} \<\Phi|_{A_i}\Big(|\P_1\>_{A_{\s(1)}}\cdots |\P_N\>_{A_{\s(N)}} \Big) \nn \\ 
&= \sum_{i}\<\Phi|\P_i\>\Big[\sum_{\s}  |\P_1\>_{A_{\s(1)}} \cdots (|\P_i\>_{A_{\s(i)}}) \cdots |\P_N\>_{A_{\s(N)}}\Big] \nn \\
&= \sum_{i}\<\Phi|\P_i\>|\P_1,\cdots,(\P_i), \cdots \P_N\>.  
\end{align}
where $(|\P_i\>_{A_{\s(i)}})$ in the secon equality and $(\P_i)$ in the last equality mean that $\P_i$ is absent. $|\P_1,\cdots,(\P_i), \cdots \P_N\>$ in the last equality follows the definition of Eq.~\eqref{partsym} with $n=N-1$. The equivalent relations with Eqs.~\eqref{A1} and \eqref{A2} can be found in Ref.~\cite{compagno2018dealing}, which presents the same protocol for obtaining a reduced density matrix.   

\section{Potentiality and actuality: the prerequisites for the entanglement of identical particles}\label{Potentiality}


Several works on the entanglement of identical particles have pointed out that this entanglement depends on both the initial particle state and the setup of detectors (which can be restated as the measurement process \cite{tichy2013entanglement} or the choice of observable (subalgebra) \cite{balachandran2013entanglement,balachandran2013algebraic,  benatti2017remarks}).  
To understand this situation intuitively, we should point out that for the particles to have nonzero coherence (relating to the detectors), they should overlap spatially. A set of spatially overlapping particles has a potential to be entangled, which however has a physical meaning (i.e., can be used as a resource) only after the entanglement is extracted onto distinguishable detectors \cite{wiseman2003entanglement,killoran2014extracting}. We can elucidate the process by borrowing the concept of \emph{potentiality-actuality dualism} from Aristotle (see, e.g., Ref.~\cite{cohen2016aristotle}). Before the particles arrive at the detectors, they only have the potentiality (the possibility that an object can have some feature) for entanglement. After the proper detection process, they have the actuality (the realization of the potentiality in the physical world) for entanglement.

\end{document}